\newcommand{\N}{\mathbb{N}}
\def\O{\mathcal{O}}
\newcommand{\Abs}[1]{\mathopen|#1\mathclose|}
\definecolor{nicebg}{HTML}{f6f0e4}
\definecolor{nicered}{HTML}{7f0a13}
\definecolor{nicebgred}{HTML}{f2e7e8}
\definecolor{niceblue}{HTML}{104354}
\definecolor{nicebgblue}{HTML}{e8edee}
\definecolor{nicegreen}{HTML}{217516}
\definecolor{nicebggreen}{HTML}{e9f1e8}
\definecolor{nicepurple}{HTML}{884bab}
\definecolor{nicebgpurple}{HTML}{f3edf7}
\definecolor{niceorange}{HTML}{d27c11}
\definecolor{nicebgorange}{HTML}{fbf2e8}
\definecolor{nicepink}{HTML}{e95f9f}
\definecolor{nicebgpink}{HTML}{fdeff6}
\definecolor{niceredlight}{HTML}{c9888d}
\definecolor{nicebluelight}{HTML}{78a4b8}
\definecolor{nicegreenlight}{HTML}{76de68}
\definecolor{nicepurplelight}{HTML}{bc87db}
\definecolor{niceredbright}{HTML}{bd0310}
\definecolor{nicebgredbright}{HTML}{f9e6e8}
\definecolor{nicebluebright}{HTML}{197b9b}
\definecolor{nicebgbluebright}{HTML}{e8f2f5}
\newcommand{\TraNs}{tra}
\newcommand{\TraName}[1]{\tag*{⟨\textsf{#1}⟩}\label{\TraNs:#1}}
\newcommand{\TraRef}[1]{\text{\ref{\TraNs:#1}}}
\pretocmd\start@gather{%
    \if@minipage\kern-\topskip\kern-\baselineskip\kern+7pt\fi
}{}{}
\newcommand{\parag}[1]{\noindent\textbf{\textsf{#1.}}}
\algrenewcommand\textproc{\textsf}
\newcommand{\True}{\textnormal{\textcolor{niceblue}{\textsf{true}}}}
\newcommand{\False}{\textnormal{\textcolor{niceblue}{\textsf{false}}}}
\newcommand{\TrueNC}{\textnormal{\textsf{true}}}
\newcommand{\FalseNC}{\textnormal{\textsf{false}}}
\algrenewcommand\algorithmicindent{1.2em}
\algnewcommand\Input{\item[\textbf{Parameter:}]}
\algnewcommand\Output{\item[\textbf{Output:}]}
\algnewcommand\Effect{\item[\textbf{Effect:}]}
\newenvironment{namedalgorithm}[2][]{%
\begin{algorithm}[#1]\renewcommand{\thealgorithm}{\textnormal{\textproc{#2}}}%
\def\Name{#2}}{\end{algorithm}}
\renewcommand{\fnum@algorithm}{\fname@algorithm {} \thealgorithm .}
\newcommand{\Reserve}{\mathsf{R}}
\def\O{\mathcal{O}}
\newcommand{\Restart}{\textnormal{\textbf{restart}}}
\newcommand{\Procs}{\mathrm{Proc}}
\newcommand{\Swap}{\textbf{swap}}
\newcommand{\Post}{\operatorname{post}}
\newcommand{\Prog}{\mathcal{P}}
\newcommand{\Rel}{\mathbin{\sim}}
\newcommand{\Confs}{\mathcal{C}}
\newcommand{\Flags}{F}
\newcommand{\Flagdom}{\mathcal{F}}
\newcommand{\IP}{\mathit{IP}}
\newcommand{\CF}{\mathit{CF}}
\newcommand{\OF}{\mathit{OF}}
\newcommand{\Maybe}{\textnormal{\textbf{detect}}}
\newcommand{\Machine}{\mathcal{A}}
\newcommand{\Virt}[1]{V_{#1}}
\newcommand{\Prot}{\mathit{PP}}
\newcommand{\Inst}{\mathcal{I}}
\newcommand{\PPInput}{I}
\newcommand{\Fstate}[1]{\textsf{#1}}
\newcommand{\Fstates}{S}
\newcommand{\Counter}{\operatorname{ctr}}
\title{Breaking through the $\Omega(n)$-space barrier: Population Protocols Decide Double-exponential Thresholds}
\titlerunning{Population Protocols Decide Double-exponential Thresholds}
\author{Philipp Czerner}{Department of Informatics, TU München, Germany \and \url{https://nicze.de/philipp} }{czerner@in.tum.de}{https://orcid.org/0000-0002-1786-9592}{This work was supported by an ERC Advanced Grant (787367: PaVeS) and by the Research Training Network of the Deutsche Forschungsgemeinschaft (DFG) (378803395: ConVeY).}
\authorrunning{P. Czerner}
\keywords{Distributed computing, population protocols, state complexity}
\begin{document}

\maketitle

\begin{abstract}
Population protocols are a model of distributed computation in which finite-state agents interact randomly in pairs. A protocol decides for any initial configuration whether it satisfies a fixed property, specified as a predicate on the set of configurations. A family of protocols deciding predicates $\varphi_n$ is \emph{succinct} if it uses $\mathcal{O}(|\varphi_n|)$ states, where $\varphi_n$ is encoded as quantifier-free Presburger formula with coefficients in binary. (All predicates decidable by population protocols can be encoded in this manner.) While it is known that succinct protocols exist for all predicates, it is open whether protocols with $o(|\varphi_n|)$ states exist for \emph{any} family of predicates $\varphi_n$. We answer this affirmatively, by constructing protocols with $\mathcal{O}(\log|\varphi_n|)$ states for some family of threshold predicates $\varphi_n(x)\Leftrightarrow x\ge k_n$, with $k_1,k_2,...\in\N$. (In other words, protocols with $\mathcal{O}(n)$ states that decide $x\ge k$ for a $k\ge 2^{2^n}$.) This matches a known lower bound. Moreover, our construction for threshold predicates is the first that is not $1$-aware, and it is almost self-stabilising.

\end{abstract}

\section{Introduction}
Population protocols are a distributed model of computation where a large number of indistinguishable finite-state agents interact randomly in pairs. The goal of the computation is to decide whether an initial configuration satisfies a given property. The model was introduced in 2004 by Angluin et al.~\cite{AngluinADFP04,AngluinADFP06} to model mobile sensor networks with limited computational capabilities (see e.g.~\cite{PerronVV09,DraiefV12}). It is also closely related to the model of chemical reaction networks, in which agents, representing discrete molecules, interact stochastically~\cite{ChenCDS17}.

A protocol is a finite set of transition rules according to which agents interact, but it can be executed on an infinite family of initial configurations. Agents decide collectively whether the initial configuration fulfils some (global) property by \emph{stable consensus}; each agent holds an opinion about the output and may freely change it, but eventually all agents agree.

An example of a property decidable by population protocols is \emph{majority}: initially all agents are in one of two states, $x$ and $y$, and they try to decide whether $x$ has at least as many agents as $y$. This property may be expressed by the predicate $\varphi(x,y)\Leftrightarrow x\ge y$.

In a seminal paper, Angluin et al.~\cite{AngluinAER07} proved that the predicates that can be decided by population protocols correspond precisely to the properties expressible in Presburger arithmetic, the first-order theory of addition.

To execute a population protocol, the scheduler picks two agents uniformly at random and executes a pairwise transition on these agents.  These two agents interact and may change states. The number of agents does not change during the computation. It will be denoted $m$ throughout this paper.

%The scheduler repeats this process until a stable consensus has emerged. Time complexity is determined via the notion of \emph{parallel time}: the expected number of transitions per agent, for inputs of a certain size. For example, there exist protocols deciding the majority predicate from above in $\O(m)$ parallel time~\cite{AngluinAE08a,CzernerGHE22}.

%In the area of time complexity, (almost) matching upper and lower bounds have been obtained. \emph{Any} predicate expressible in Presburger arithmetic can be decided within $\O(m)$ parallel time~\cite{AngluinADFP06,AngluinAE08a,CzernerGHE22}, and some predicates require $\Omega(m/\operatorname{polylog} m)$ parallel time~\cite{AlistarhAEGR17}.

Population protocols are often extended with a \emph{leader} ― an auxiliary agent not part of the input, which can assist the computation. It is known that this does not increase the expressive power of the model, i.e.\ it can still decide precisely the predicates expressible in Presburger arithmetic. However, it is known that leaders enable an exponential speed-up~\cite{AngluinAE08a,AlistarhAEGR17} in terms of the time that is needed to come to a consensus.
%However, \cite{AngluinAE08a} gives a general construction that runs within $\O(\operatorname{polylog}m)$ parallel time – hence it is known that leaders enable an exponential speed-up.

\parag{Space complexity}
Many constructions in the literature need a large number of states. We estimate, for example, that the protocols of~\cite{AngluinAE08a} need tens of thousands of states. This is a major obstacle to implementing these protocols in chemical reactions, as every state corresponds to a chemical compound.

\newcommand{\Space}{\operatorname{space}}
\newcommand{\Poly}{\operatorname{poly}}
This motivates the study of \emph{space complexity}, the minimal number of states necessary for a population protocol to decide a given predicate. Predicates are usually encoded as quantifier-free Presburger formulae with coefficients in binary. For example, the predicates $\varphi_n(x)\Leftrightarrow x\ge 2^n$ have length $\Abs{\varphi_n}\in\Theta(n)$. Formally we define $\Space(\varphi)$ as the smallest number of states of any protocol deciding $\varphi$, and $\Space_L(\varphi)$ as the analogous function for protocols with a leader. Clearly, $\Space(\varphi)_L\le\Space(\varphi)$.

The original construction in~\cite{AngluinADFP04} showed $\Space(\varphi)\in\O(2^{\Abs{\varphi}})$ – impractically large. For the family of \emph{threshold predicates} $\tau_n(x)\Leftrightarrow x\ge n$ Blondin, Esparza and Jaax~\cite{BlondinEJ18} prove $\Space(\tau_n)\in\O(\Abs{\tau_n})$, i.e.\ they have polynomial space complexity. For several years it was open whether similarly succinct protocols exist for every predicate. This was answered positively in~\cite{BlondinEGHJ20}, showing $\Space(\varphi)\in\O(\Poly(\Abs{\varphi}))$ for all $\varphi$. 

Is it possible to do much better? For most predicates it is not; based on a simple counting argument one can show that for every family $\varphi_n$ with $\Abs{\varphi_n}\in\O(n)$ there is an infinite subfamily $(\varphi_n')_n\subseteq(\varphi_n)_n$ with $\Space_L(\varphi_n')\in\Omega(\Abs{\varphi_n}^{1/4-\varepsilon})$, for any $\varepsilon>0$~\cite{BlondinEJ18}.

This covers threshold predicates and many other natural families of protocols (e.g.\ $\varphi_n(x)\Leftrightarrow x\equiv 0\pmod n$ or $\varphi_n(x,y)\Leftrightarrow x\ge ny$). But it is not an impenetrable barrier, even for the case of threshold protocols: it does not rule out constructions that work for \emph{infinitely many} (but not all) thresholds and use only, say, logarithmically many states. Indeed, if leaders are allowed this is known to be possible: \cite{BlondinEJ18} shows $\Space_L(\tau_n')\in\O(\log \Abs{\tau_n'})$ for some subfamily $\tau_n'$ of threshold predicates.

Recently, \emph{general} lower bounds have been obtained, showing $\Space(\tau_n)\in\Omega(\log^{1-\varepsilon} \Abs{\tau_n})$ for all $\varepsilon>0$~\cite{CzernerE21,CzernerEL22}. The same bound (up to $\varepsilon=1/2$) holds even if the model is extended with leaders~\cite{Leroux21}.

For leaderless population protocols, these results leave an exponential gap. In this paper we settle that question and show that, contrary to prevailing opinion, $\Space(\tau_n')\in\O(\log \Abs{\tau_n'})$ for some subfamily $\tau_n'$ of threshold predicates. In other words, we construct the first family of \emph{leaderless} population protocols that decide double-exponential thresholds and break through the polynomial barrier.

\begin{table}
\caption{Prior results on the state complexity of threshold predicates $\varphi(x)\Leftrightarrow x\ge k$, for $k\in\N$. Upper bounds need only hold for infinitely many $k$. We elide exponentially dominated factors from lower bounds.}\label{tab:results}
\begin{center}
\begin{tabular}{lllll}
year & result & type & ordinary & with leaders \\\midrule
2018 & Blondin, Esparza, Jaax~\cite{BlondinEJ18} & construction & $\O(\Abs{\varphi})$ & $\O(\log\Abs{\varphi})$ \\
2021 & Czerner, Esparza~\cite{CzernerE21} & impossibility & $\Omega(\log\log\Abs{\varphi})$ & $\Omega(\operatorname{ack}^{-1}\Abs{\varphi})$ \\
2021 & Czerner, Esparza, Leroux~\cite{CzernerEL22}& impossibility & $\Omega(\log\Abs{\varphi})$ &\\
2022 & Leroux~\cite{Leroux21} & impossibility & &$\Omega(\log\Abs{\varphi})$ \\
2024& this paper & construction & $\O(\log\Abs{\varphi})$ &\\
\end{tabular}
\end{center}
\end{table}

\parag{Robustness}
Since population protocols model computations where large numbers of agents interact, it is desirable that protocols deal robustly with noise. In a chemical reaction, for example, there can be trace amounts of unwanted molecules. So the initial configuration of the protocol would have the form $C_I+C_N$, where $C_I$ is the “intended” initial configuration, containing only agents in the designated initial states, and $C_N$ is a “noise” configuration, which can contain agents in arbitrary states.

For threshold predicates, specifically, we want to decide whether $\Abs{C_I}+\Abs{C_N}$ exceeds some threshold $k\in\N$, under some reasonable restrictions to $C_I,C_N$. However, all known threshold protocols fail even for the case $\Abs{C_N}=1$. Is it possible to do better?

If $C_N$ can be chosen arbitrarily, then the protocol has to work correctly for \emph{all} input configurations. This property is known as \emph{self-stabilisation}, and it has also been investigated in the context of population protocols~\cite{AngluinAFJ05,CaiIW12,BurmanCCDNSX21}. However, it can only be achieved in extensions of the model (e.g.\ on specific communication graphs, or with a non-constant number of states). This is easy to see in the case of threshold predicates: if any configuration is stably accepting, then any smaller configuration is stably accepting as well. In particular, there is a stably accepting configuration with $k-1$ agents.

While full self-stabilisation is impossible, in this paper we show that one can come remarkably close. We prove that our construction is \emph{almost self-stabilising}, meaning that it computes the correct output for all $C_I,C_N$ with $\Abs{C_I}\ge n$, where $n$ is the number of states of the protocol. We do not constraint $C_N$ at all. Since $n\in\O(\log\log k)$ in our protocol, this means that one can take an arbitrary configuration $C_N$ one wishes to count, add a tiny amount of agents to the initial state, and the protocol will compute the correct output.

\parag{Related work}
We consider the space complexity of families of protocols, each of which decides a different predicate. In another line of research, one considers a family of protocols for the \emph{same} predicate, where each protocol is specialised for a fixed population size $m$.

In the original model of population protocols (which is also the model of this paper), the set of states is fixed, and the same protocol can be used for an arbitrary number of agents. Relaxing this requirement has opened up a fruitful avenue of research; here, the number of states depends on $m$ (e.g.\ the protocol has $\O(\log m)$ states, or even $\O(\log\log m)$ states). In this model, faster protocols can be achieved~\cite{AlistarhGV15,MocquardAABS15,MocquardAS16}.

It has also led to space-efficient, fast protocols, which stabilise within $\O(\operatorname{polylog}m)$ parallel time, using a state-space that grows only slowly with the number of agents, e.g.\ $\O(\operatorname{polylog}m)$ states~\cite{AlistarhAEGR17,BilkeCER17,AlistarhAG18,BerenbrinkEFKKR18,NunKKP20,BerenbrinkEFKKR21,DotyEGSUS21}. These protocols have focused on the majority predicate.
Moreover, lower bounds and results on time-space tradeoffs have been developed in this model~\cite{AlistarhAEGR17,AlistarhAG18}.

\section{Main result}\label{sec:mainresult}
We construct population protocols (without leaders) for an infinite family of threshold predicates $\varphi_n(x)\Leftrightarrow x\ge k_n$, with $k_1,...\in\N$, proving an $\O(\log\Abs{\varphi_n})$ upper bound on their state complexity. This closes the final gap in the state complexity of threshold predicates.

As in prior work, our result is not a construction for arbitrary thresholds $k$, only for an infinite family of them. It is, therefore, easier to formally state by fixing the number of states $n$ and specifying the largest threshold $k$ that can be decided by a protocol with $n$ states.

\begin{theorem}\label{thm:main}
For every $n\in\N$ there is a population protocol with $\O(n)$ states deciding the predicate $\varphi(x)\Leftrightarrow x\ge k$ for some $k\ge 2^{2^n}$.
\end{theorem}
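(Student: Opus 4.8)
The plan is to reduce the theorem to a statement about a small imperative programming language. First I would fix a language whose programs operate on a fixed finite set of counters $\Counters$ (ranging over $\N$) and a fixed finite set of Boolean flags, with an instruction set consisting of unconditional jumps (\Goto), zero-tests with jump (\Ifgoto), nondeterministic choice (\Maybe), unit transfers between counters (and a \Swap of two counters), a distinguished reserve counter $\Reserve$ holding scratch agents, and a \Restart that resets the machine to its initial state; the \emph{size} of a program is the number of instructions plus the number of counters. I would then prove two lemmas. Lemma~(1): every program of size $s$ is simulated by a leaderless population protocol with $\O(s)$ states, where the input agents themselves realise the counter contents and one ``roaming'' agent carries the instruction pointer together with a copy of the flags. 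Lemma~(2): there is a program of size $\O(n)$ that, started with all agents in $\Reserve$, computes the number $2^{2^n}$ and sets its output flag exactly when this computation runs to completion. Chaining the two lemmas yields a protocol with $\O(n)$ states deciding $x\ge k$ for a threshold $k=\Theta(2^{2^n})\ge 2^{2^n}$, which is the theorem.

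For Lemma~(2), the program initialises a counter $t$ to $n$ by a block of $n$ increments (each moving a unit out of $\Reserve$), sets a counter $c$ to $2$, and then repeats $t$ times the assignment $c\leftarrow c^2$ — each squaring being a constant-size doubly-nested loop that uses one extra scratch counter and returns it, together with $\Reserve$, to a clean state afterwards — so that $c$ ends at $2^{2^n}$ by iterated squaring; any failed consistency check triggers \Restart. Only the outer ``repeat $t$ times'' contributes a non-constant number of instructions, so the program has size $\O(n)$. Since the peak value of $c$, and hence the population the program needs to hold it, is $\Theta(2^{2^n})$, the resulting protocol in fact certifies $x\ge c\cdot 2^{2^n}$ for a constant $c$; this still implies the stated bound, \emph{provided} the scratch counter and $\Reserve$ are genuinely recycled between iterations so that $c$ does not itself grow with $n$ — something to verify carefully.

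For Lemma~(1), a configuration of the protocol consists of one or more ``control'' agents, each carrying an instruction pointer and a copy of the flags, and ``data'' agents, each tagged with the counter it currently belongs to (a distinguished tag marks membership in $\Reserve$). Transitions implement each instruction locally: a transfer or \Swap relabels a data agent, \Maybe is realised by supplying two alternative transition rules, and the instruction pointer advances on a control agent. Because the model is leaderless, initially every agent is a control agent; control agents that meet merge, so their number is non-increasing, and — this is the crux — the protocol is designed so that from every reachable configuration a clean single-controller run of the program is still enabled, with \Restart supplying the reset that makes this possible. Output is by consensus: the default opinion is ``reject''; an agent's opinion flips to ``accept'' only in configurations that witness a completed run with the output flag set; and any \Restart (in particular one triggered by detecting a surplus control agent or an untrusted zero-test) reverts every agent to ``reject''. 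One then shows that if $x\ge k$ there is a fair execution reaching a stable configuration in which all agents accept and no reachable successor leaves that set, whereas if $x<k$ every run eventually gets stuck and restarts, so no ``accept'' configuration is reachable and the consensus is ``reject''.

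The main obstacle is Lemma~(1) in the leaderless setting: implementing a bounded instruction pointer and, above all, a trustworthy zero-test without a leader, and proving that spurious parallel control threads can never force a false ``accept''. This is exactly what \Restart together with the ``accept provisionally, then keep checking the invariants'' mechanism is for, and it is what lets the construction sidestep the $\Omega(\log k)$ barrier for $1$-aware leaderless protocols of~\cite{BlondinEJ18}. A secondary technical difficulty, already noted, is bounding $\Reserve$ so that the squaring subroutine's scratch usage stays within a constant multiple of its output instead of accumulating over the $n$ iterations.
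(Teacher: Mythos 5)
Your high-level plan (reduce to a small register-machine language, simulate it leaderlessly with $\O(s)$ states, and write an $\O(n)$-size program that counts to $2^{2^n}$ by repeated squaring) matches the paper's architecture in outline, but it has two genuine gaps, and they are precisely where the paper's technical work lies. First, your instruction set contains a deterministic zero-test (\Ifgoto), and your Lemma~(1) claims it can be simulated with $\O(s)$ states by a leaderless protocol. It cannot: detecting absence of agents is exactly what population protocols lack, and if a constant-overhead zero-test were available the theorem would be trivial. Your squaring subroutine and the ``repeat $t$ times'' loop both hinge on reliable zero-tests on counters whose values reach $2^{2^n}$; with only nondeterministic nonzero checks (the $\Maybe$ primitive, which is all the simulation can really give you) the inner loops can exit early, the program ``completes'' after counting to some $k'\ll 2^{2^n}$, and the protocol accepts with far too few agents. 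You flag the zero-test as the main obstacle of Lemma~(1) and hope that \Restart{} plus provisional acceptance handles it, but those mechanisms only guard against spurious control agents and bad starts; they do not create a zero-test. The paper's resolution changes the \emph{program}, not the simulation: following Lipton, each level-$i$ register $x$ is paired with $\overline{x}$ under the invariant $x+\overline{x}=N_i$, so $x=0$ becomes the certifiable threshold $\overline{x}\ge N_i$, which is checked by a procedure (\ref{alg:large}) that uses the level-$(i-1)$ registers as a bounded counter --- a bootstrapping across $n$ levels, with $N_{i+1}=(N_i+1)^2$.

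Second, your Lemma~(2) assumes the program is ``started with all agents in $\Reserve$,'' and your restart is assumed to reset to that clean state. In the leaderless conversion no such guarantee exists: the initial (and post-restart) values of \emph{all} registers are adversarial, and gathering every agent back into $\Reserve$ and knowing you have done so is again an absence-detection problem. The paper's program therefore does not \emph{compute} $2^{2^n}$ at all; it \emph{verifies} that the adversarially given configuration is (essentially) the unique ``proper'' one for its population size, accepts only provisionally, and keeps running consistency checks (\ref{alg:checkproper}, \ref{alg:checkempty}) forever, restarting on any violation --- together with the robustness analysis needed because those checks themselves run on possibly corrupted registers. Without these two ingredients --- the paired-counter bootstrapping replacing zero-tests, and verification under adversarial initialisation replacing computation from a clean $\Reserve$ --- your two lemmas do not chain into a proof.
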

\begin{proof}
This will follow from theorems~\ref{thm:exists-program} and~\ref{thm:conversion}.
\end{proof}

The result is surprising, as prevailing opinion was that the existing constructions are optimal. This was based on the following:
\begin{itemize}
\item It is intuitive that population protocols with leaders have an advantage. In particular, one can draw a parallel to time complexity, where an exponential gap is proven: for some predicates protocols with leaders have $\O(\operatorname{polylog} m)$ parallel time, while all leaderless protocols have $\Omega(m)$ parallel time.
\item The $\O(\log\log k)$-state construction from~\cite{BlondinEJ18} crucially depends on having leaders.
\item The technique to show the $\Omega(\log\log k)$ lower bound could, for the most part, also be used for a $\Omega(\log k)$ bound. Only the use of Rackoff's theorem, a general result for Petri nets, does not extend.
\item There is a conditional impossibility result, showing that $\Omega(\log k)$ states are necessary for leaderless $1$-aware protocols.~\cite{BlondinEJ18} (Essentially, protocols where some agent knows at some point that the threshold has been exceeded.) All prior constructions are $1$-aware.
\end{itemize}

Regarding the last point, our protocol evades the mentioned conditional impossibility result by being the first construction that is not $1$-aware. Intuitively, our protocol only accepts provisionally and continues to check that no invariant has been violated. Based on this, we also obtain the following robustness guarantee:

\begin{restatable}{theorem}{restateMainTwo}\label{thm:main2}
The protocols of Theorem~\ref{thm:main} are almost self-stabilising.
\end{restatable}

\parag{Overview}
We build on the technique of Lipton~\cite{lipton1976reachability}, which describes a double-exponential counting routine in vector addition systems. Implementing this technique requires the use of procedure calls; our first contribution are \emph{population programs}, a model in which population protocols can be constructed by writing structured programs, in Section~\ref{sec:population-programs}. Every such program can be converted into an equivalent population protocol.

However, population programs provide weaker guarantees than the model of parallel programs used in~\cite{lipton1976reachability}. Both models access registers with values in $\N$. In a parallel program these are initialised to 0, while in a population program \emph{all} registers start with arbitrary values. This limitation is essential for our conversion into population protocols. 

A straightforward implementation is, therefore, impossible. Instead, we have to adapt the technique to work with arbitrary initial configurations. Our second contribution, and the main technical difficulty of this result, is extending the original technique with error-checking routines to work in our model. We use a detect-restart loop, which determines whether the initial configuration is “bad” and, if so, restarts with a new initial configuration. The stochastic behaviour of population protocols ensures that a “good” initial configuration is reached eventually. Standard techniques could be used to avoid restarts with high probability and achieve an optimal running time, but this is beyond the scope of this paper.

A high level overview of both the original technique as well as our error-checking strategy is given in Section~\ref{sec:overview}. We then give a detailed description of our construction in Section~\ref{sec:details}.

To get population protocols, we need to convert from population programs. We split this into two parts. First, we use standard techniques to lower population programs to \emph{population machines}, an assembly-like programming language. In a second step we simulate arbitrary population machines by population protocols. This conversion is described in Section~\ref{sec:conversion}.

Finally, we introduce the notion of being almost self-stabilising in Section~\ref{sec:robustness}, and prove that our construction has this property.

To start out, Section~\ref{sec:preliminaries} introduces the necessary mathematical notation and formally defines population protocols as well as the notion of stable computation.

\section{Preliminaries}\label{sec:preliminaries}

\parag{Multisets} We assume $0\in\N$. For a finite set $Q$ we write $\N^Q$ to denote the set of multisets containing elements in $Q$. For such a multiset $C\in\N^Q$, we write $C(S):=\sum_{q\in S}C(q)$ to denote the total number of elements in some $S\subseteq Q$, and set $\Abs{C}:=C(Q)$. Given two multisets $C,C'\in\N^Q$ we write $C\le C'$ if $C(q)\le C'(q)$ for all $q\in Q$, and we write $C+C'$ and $C-C'$ for the componentwise sum and difference (the latter only if $C\ge C'$). Abusing notation slightly, we use an element $q\in Q$ to represent the multiset $C$ containing exactly $q$, i.e.\ $C(q)=1$ and $C(r)=0$ for $r\ne q$. 

\parag{Stable computation} We are going to give a general definition of stable computation not limited to population protocols, so that we can later reuse it for population programs and population machines. Let $\Confs$ denote a set of configurations and $\rightarrow$ a left-total binary relation on $\Confs$ (i.e.\ for every $C\in\Confs$ there is a $C'\in\Confs$ with $C\rightarrow C'$). Further, we assume some notion of output, i.e.\ some configurations have an output $b\in\{\True,\False\}$ (but not necessarily all).

A sequence $\tau=(C_i)_{i\in\N}$ with $C_i\in\Confs$ is a \emph{run} if $C_i\rightarrow C_{i+1}$ for all $i\in\N$. We say that $\tau$ \emph{stabilises to $b$}, for $b\in\{\True,\False\}$, if there is an $i$ s.t.\ $C_j$ has output $b$ for every $j\ge i$. A run $\tau$ is \emph{fair} if $\cap_{i\ge0}\{C_i,C_{i+1},...\}$ is closed under $\rightarrow$, i.e.\ every configuration that \emph{can} be reached infinitely often \emph{is}.

\parag{Population protocols} A \emph{population protocol} is a tuple $\Prot=(Q,\delta,I,O)$, where
\begin{multicols}{2}
\begin{itemize}
\item $Q$ is a finite set of \emph{states},
\item $\delta\subseteq Q^4$ is a set of \emph{transitions},
\item $I\subseteq Q$ is a set of \emph{input states}, and
\item $O\subseteq Q$ is a set of \emph{accepting states}.
\end{itemize}
\end{multicols}
We write transitions as $(q,r\mapsto q',r')$, for $q,r,q',r'\in Q$. A \emph{configuration} of $\Prot$ is a multiset $C\in\N^Q$ with $\Abs{C}>0$. A configuration $C$ is \emph{initial} if $C(q)=0$ for $q\notin I$ (one might also say $C\in\N^I$ instead). It has output \True\ if $C(q)=0$ for $q\notin O$, and output \False\ if $C(q)=0$ for $q\in O$. For two configurations $C,C'$ we write $C\rightarrow C'$ if $C=C'$ or if there is a transition $(q,r\mapsto q',r')\in\delta$ s.t.\ $C\ge q+r$ and $C'=C-q-r+q'+r'$.

Let $\varphi:\N^I\rightarrow\{\True,\False\}$ denote a predicate. We say that $\Prot$ \emph{decides} $\varphi$, if every fair run starting at an initial configuration $C\in\N^I$ stabilises to $\varphi(C)$, where fair run and stabilisation are defined as above.

%\smallskip
%As is usual, we use the notion of fairness instead of defining a stochastic process, as the former is easier to reason about. Additionally, if one considers only the predicate decided by a protocol (and not, say, its time complexity) the two coincide.

\section{Population Programs}\label{sec:population-programs}
We introduce population programs, which allows us to specify population protocols using structured programs. An example is shown in Figure~\ref{fig:program-example}.

Formally, a \emph{population program} is a tuple $\Prog=(Q,\Procs)$, where $Q$ is a finite set of \emph{registers} and $\Procs$ is a list of \emph{procedures}. Each procedure has a name and consists of (possibly nested) while-loops, if-statements and instructions. These are described in detail below.

\begin{figure}[t]
\algrenewcommand\algorithmicindent{0.8em}
\begin{minipage}[t]{0.32\textwidth}
\begin{algorithmic}[1]
\Procedure{Main}{}
    \State $\OF:=\False$
    \While{$\neg\Call{Test}{4}$}
        \State\Call{Clean}{}
    \EndWhile
    \State $\OF:=\True$
    \While{$\neg\Call{Test}{7}$}
        \State\Call{Clean}{}
    \EndWhile
    \State $\OF:=\False$
    \While{\True}
        \State\Call{Clean}{}
    \EndWhile
\EndProcedure
\end{algorithmic}
\end{minipage}
\begin{minipage}[t]{0.32\textwidth}
\begin{algorithmic}[1]
\Procedure{Test}{$i$}
    \For{$j=1,...,i$}
        \If{\Maybe\ $x>0$}
            \State $x\mapsto y$
        \Else
            \State\Return\False
        \EndIf
    \EndFor
    \State\Return\True
\EndProcedure
\end{algorithmic}
\end{minipage}
\begin{minipage}[t]{0.32\textwidth}
\begin{algorithmic}[1]
\Procedure{Clean}{}
    \If{\Maybe\ $z>0$}
        \State\Restart
    \EndIf
    \State\Swap\ $x,y$
    \While{$\Maybe\ y>0$}
        \State $y\mapsto x$
    \EndWhile
\EndProcedure
\end{algorithmic}
\end{minipage}
\caption{A population program for $\varphi(x)\Leftrightarrow 4\le x<7$ using registers $x,y,z$.
\Call{Main}{} is run initially and decides the predicate, $\Call{Test}{i}$ tries to move $i$ units from $x$ to $y$ and reports whether it succeeded, and \Call{Clean}{} checks whether $z$ is empty and moves some number of units from $y$ to $x$. If \Call{Clean}{} detects an agent in $z$, it restarts the computation. As every run calls \Call{Clean}{} infinitely often, this serves to reject initial configurations where $z$ is nonzero; eventually the protocol will be restarted with $z=0$.
This is an illustrative example and some simplifications are possible. E.g.\ the instruction $(\Swap\ x,y)$ in \Call{Clean}{} is superfluous; additionally, instead of checking $z>0$ one could omit that register entirely.
}
\label{fig:program-example}
\end{figure}

\parag{Primitives} Each register $x\in Q$ can take values in $\N$. Only three operations on these registers are supported.
\begin{itemize}
\item The move instruction $(x\mapsto y)$, for $x,y\in Q$, decreases the value of $x$ by one, and increases the value of $y$ by one. We also say that it moves one unit from $x$ to $y$. If $x$ is empty, i.e.\ its value is zero, the programs hangs and makes no further progress
\item The nondeterministic nonzero-check $(\Maybe\ x>0)$, for $x\in Q$, nondeterministically returns either $\False$ or whether $x>0$. In other words, if it does return \True, it certifies that $x$ is nonzero. If it returns \False, however, no information has been gained. We consider only fair runs, so if $x$ is nonzero the check cannot return \False\ infinitely often.
\item A swap $(\Swap\ x,y)$ exchanges the values of the two registers $x,y$. This primitive is not necessary, but it simplifies the implementation.
\end{itemize}

\parag{Loops and branches} Population programs use while-loops and if-statements, which function as one would expect.

We also use for-loops. These, however, are just a macro and expand into multiple copies of their body. For example, in the program in Figure~\ref{fig:program-example} the for-loop in \Call{Test}{} expands into $i$ copies of the contained if-statement.

\parag{Procedures} Our model has procedure calls, but no recursion. Procedures have no arguments, but we may have parameterised copies of a procedure. The program in Figure~\ref{fig:program-example}, for example, has four procedures: \Call{Main}{}, \Call{Clean}{}, \Call{Test}{4}, and \Call{Test}{7}.

Procedure calls must be acyclic. It is thus not possible for a procedure to call itself, and the size of the call stack remains bounded. We remark that one could inline every procedure call. The main reason to make use of procedures at all is succinctness: if our program contains too many instructions, the resulting population protocol has too many states.

Procedures may return a single boolean value, and procedure calls can be used as expressions in conditions of while- or if-statements.

\parag{Output flag} There is an output flag $\OF$, which can be modified only via the instructions $\OF:=\True$ and $\OF:=\False$. (These are special instructions; it is not possible to assign values to registers.) The output flag determines the output of the computation. 

\parag{Initialisation and restarts} The only guarantee on the initial configuration is that execution starts at \Call{Main}{}. In particular, all registers may have arbitrary values.

There is one final kind of instruction: \Restart. As the name suggests, it restarts the computation. It does so by nondeterministically picking any initial configuration s.t.\ the sum of all registers does not change.

\parag{Size} The \emph{size} of $\Prog$ is defined as $\Abs{Q}+L+S$, where $L$ is the number of instructions and $S$ is the \emph{swap-size}. The latter is defined as the number of pairs $(x,y)\in Q^2$ for which it is syntactically possible for $x$ to swap with $y$ via any sequence of swaps. \footnote{Unfortunately, without restrictions we would convert swaps to population protocols with a quadratic blow-up in states, so we introduce this technical notion to quantify the overhead.} For example, in Figure~\ref{fig:program-example} the swap-size is two: $(x,y),(y,x)$ can be swapped, but e.g.\ $(x,z)$ cannot. If we add a (\textbf{swap} $y,z$) instruction at any point, then $(x,z)$ can be swapped (transitively), and the swap-size would be $6$.

\parag{Configurations and Computation} A \emph{configuration} of $\Prog$ is a tuple $D=(C,\OF,\sigma)$, where $C\in\N^Q$ is the \emph{register configuration}, $\OF\in\{\True,\False\}$ is the value of the output flag, and $\sigma\in(\Procs\times\N)^*$ is the call stack, storing names and currently executed instructions of called procedures. (E.g.\ $\sigma=((\Call{Main}{},3),(\Call{Test(4)}{},1))$ when \Call{Test}{} is first called in Figure~\ref{fig:program-example}.) A configuration is \emph{initial} if $\sigma=((\Call{Main}{},1))$ and it has \emph{output} $\OF$. For two configurations $D,D'$ we write $D\rightarrow D'$ if $D$ can move to $D'$ after executing one instruction. 

Using the general notion of stable computation defined in Section~\ref{sec:preliminaries}, we say that $\Prog$ \emph{decides} a predicate $\varphi(x)$, for $k\in\N$, if every run started at an initial configuration $(C,\OF,\sigma)$ stabilises to $\varphi(|C|)$.
Note that this definition limits population programs to decide only unary predicates.

\parag{Notation}
When analysing population programs it often suffices to consider only the register configuration Let $C,C'\in\N^Q$, $b\in\{\False,\True\}$ and let $f\in\Procs$ denote a procedure. We consider the possible outcomes when executing $f$ in a configuration with registers $C$. Note that the program is nondeterministic, so multiple outcomes are possible. If $f$ may return $b$ with register configuration $C'$, we write $C,f\rightarrow C',b$. For procedures not returning a value, we use $C,f\rightarrow C'$ instead. If $f$ may initiate a restart, we write $C,f\rightarrow\Restart$. If $f$ may hang or not terminate, we write $C,f\rightarrow\bot$. Finally, we define $\Post(C,f):=\{S:C,f\rightarrow S\}$.

\section{High-level Overview}\label{sec:overview}
We give an intuitive explanation of our construction. This section has two parts. As mentioned, we use the technique of Lipton~\cite{lipton1976reachability} to count to $2^{2^n}$ using $4n$ registers. We will give a brief explanation of the original technique in Section~\ref{ssec:de-counting}. Readers might also find the restatement of Liptons proof in~\cite{Esparza96} instructive ― the Petri net programs introduced therein are closer to our approach, and more similar to models used in the recent Petri net literature.

A straightforward application of the above technique only works if some guarantees are provided for the initial configuration (e.g.\ that the $4n$ registers used are empty, while an additional register holds all input agents). No such guarantees are given in our model. Instead, we have to deal with adversarial initialisation, i.e.\ the notion that registers hold arbitrary values in the initial configuration. Section~\ref{ssec:error-detection} describes the problems that arise, as well as our strategies for dealing with them.

\subsection{Double-exponential counting}\label{ssec:de-counting}
The biggest limitation of population programs is their inability to detect absence of agents. This is reflected in the $(\Maybe\ x>0)$ primitive; it may return $\True$ and thereby certify that $x$ is nonzero, but it may always return $\False$, regardless of whether $x=0$ actually holds. In particular, it is impossible to implement a zero-check.

However, Lipton observes that if we have two registers $x,\overline{x}$ and ensure that the invariant $x+\overline{x}=k$ holds, for some fixed $k\in\N$, then $x=0$ is equivalent to $\overline{x}\ge k$. Crucially, it is possible to certify the latter property; if we have a procedure for checking $\overline{x}\ge k$, we can run both checks ($x>0$ and $\overline{x}\ge k$) in a loop until one of them succeeds. Therefore, we may treat $x$ as $k$-bounded register with deterministic zero-checks. 

This seems to present a chicken-and-egg problem: to implement this register we require a procedure for $\overline{x}\ge k$, but checking such a threshold is already the overall goal of the program. Lipton solves this by implementing a bootstrapping sequence. For small $k$, e.g.\ $k=2$, one can easily implement the required $\overline{x}\ge k$ check. We use that as subroutine for \emph{two} $k$-bounded registers, $x$ and $y$. Using the deterministic zero-checks, $x$ and $y$ can together simulate a single $k^2$-bounded register with deterministic zero-check; this then leads to a procedure for checking $\overline{z}\ge k^2$ (for some other register $\overline{z}$).

Lipton iterates this construction $n$ times. We have $n$ levels of registers, with four registers $x_i,y_i,\overline{x}_i,\overline{y}_i$ on each level $i\in\{1,...,n\}$. For each level we have a constant $N_i\in\N$ and ensure that $x_i+\overline{x}_i=y_i+\overline{y}_i=N_i$ holds. These constants grow by repeated squaring, so e.g.\ $N_1=2$ and $N_{i+1}=N_i^2$. Clearly, $N_n=2^{2^n}$. (Our actual construction uses slightly different $N_i$.)

We have not yet broached the topic of initialising these registers s.t.\ the necessary invariants hold. For our purposes, having a separate initialisation step is superfluous. Instead, we check whether the invariants hold in the initial configuration and restart (nondeterministically choosing a new initial configuration) if they do not.

\subsection{Error detection}\label{ssec:error-detection}
Our model provides only weak guarantees. In particular, we must deal with adversarial initialisation, meaning that the initial configuration can assign arbitrary values to any register. This is not limited to a designated set of initial registers; all registers used in the computation are affected.

Let us first discuss how the above construction behaves if its invariants are violated. As above, let $x,\overline{x}$ denote registers for which we want to keep the invariant $x+\overline{x}=k$, for some $k\in\N$. If instead $x+\overline{x}>k$, the “zero-check” described above is still guaranteed to terminate, as either $x>0$ or $\overline{x}\ge k$ must hold. However, it might falsely return $x=0$ when it is not. The procedure we use above, to combine two $k$-bounded counter to simulate a $k^2$-bounded counter, exhibits erratic behaviour under these circumstances. When we try to use it to count to $k^2$ we might instead only count to some lower value $k'<k^2$, even $k'\in\O(k)$.

If the invariant is violated in the other direction, i.e.\ $x+\overline{x}<k$ holds, we can never detect $x=0$ and will instead run into an infinite loop.

The latter case is more problematic, as detecting it would require detecting absence. For the former, we can ensure that we check $x+\overline{x}\ge k+1$ infinitely often; if $x+\overline{x}>k$, this check will eventually return $\True$ and we can initiate a restart. For the $x+\overline{x}>k$ case the crucial insight is that we cannot \emph{detect} it, but we can \emph{exclude} it: we issue a single check $x+\overline{x}\ge k$ in the beginning. If it fails, we restart immediately.

\parag{A simplified model}
In the full construction, we have many levels of registers that rely on each other. Instead, we first consider a simplified model here to explain the main ideas.

{
\def\Check{\operatorname{\textsc{Check}}}
In our simplified model there is only a single register $x_i$ per level $i\in\{1,...,n\}$ as well as one “level $n+1$” register $\Reserve$. For $i\in\{1,...,n\}$ we are given subroutines $\Check(x_i\ge N_i)$ and $\Check(x_i>N_i)$ which we use to check thresholds; however, they are only guaranteed to work if $x_1=N_1$, $x_2=N_2$, ..., $x_{i-1}=N_{i-1}$ hold.

Our goal is to decide the threshold predicate $m\ge \sum_iN_i$, where $m:=\sum_i x_i+\Reserve$ is the sum of all registers. For each possible value of $m$ we pick one initial configuration $C_m$ and design our procedure s.t.
\begin{itemize}
\item every initial configuration different from $C_m$ will cause a restart, and
\item if started on $C_m$ it is \emph{possible} that the procedure enters a state where it cannot restart.
\end{itemize}
The structure of $C_m$ is simple: we pick the largest $i$ s.t.\ we can set $x_j:=N_j$ for $j\le i$ and put the remaining units into $x_{i+1}$ (or $\Reserve$, if $i=n$). The procedure works as follows:
\begin{enumerate}
\item We nondeterministically guess $i\in\{0,...,n\}$.
\item We run $\Check(x_j\ge N_j)$ for all $j\in\{1,...,i\}$. If one of these checks fails, we restart.
\item According to $i=n$ we set the output flag to $\True$ or $\False$.
\item To verify that we are in $C_m$, we check the following infinitely often. For $j\in\{1,...,i\}$ we run $\Check(x_j>N_j)$ and restart if it succeeds. If $i<n$ we also restart if $\Check(x_{i+1}\ge N_{i+1})$ or one of $x_{i+2}, ..., x_n, \Reserve$ is nonempty.
\end{enumerate}

Clearly, when started in $C_m$ and $i$ is guessed correctly, it is possible for step 2 to succeed, and it is impossible for step 4 to restart. If $i$ is too large, step 2 cannot work, and if $i$ is too small step 4 will detect $x_{i+1}\ge N_{i+1}$. So the procedure will restart until the right $i$ is guessed and step 4 is reached.

Consider an initial configuration $C\ne C_m$, $|C|=m$. There are two cases: either there is a $k$ with $C(x_k)<C_m(x_k)$, or some $k$ has $C(x_k)>C_m(x_k)$. Pick a minimal such $k$.

In the former case, step 2 can only pass if $i<k$, but then one of $x_{i+2}, ..., x_n, \Reserve$ is nonempty and step 4 will eventually restart.

The latter case is more problematic. Step 2 can pass regardless of $i$ (for $i>k$ the precondition of $\Check$ is not met). In step 4, either $i<k$ and then $x_{i+1}\ge N_{i+1}$ or one of $x_{i+2}, ..., x_n, \Reserve$ is nonempty, or $i\ge k$ and one of the checks $\Check(x_j>N_j)$ will eventually restart, for $j=k$.

This would be what we are looking for, but note that we implicitly made assumptions about the behaviour of $\Check$ when called without its precondition being met. We need two things: all calls to $\Check$ terminate and they do not change the values of any register. The second is the simpler one to deal with: later, we will have multiple registers per level and our procedures only need to move agents between registers of the same level. This keeps the sum of registers of one level constant, this weaker property suffices for correctness.

Ensuring that all calls terminate is more difficult. It runs into the problem discussed above, where a zero-check might not terminate if the invariant of its register is violated. In this simplified model it corresponds to the case $x_i<N_i$.

However, we note that $\Check(x_i\ge N_i)$ and $\Check(x_i>N_i)$ are only called if $(x_1,...,x_{i-1})\ge_{\mathrm{lex}}(N_1,...,N_{i-1})$, where $\ge_{\mathrm{lex}}$ denotes lexicographical ordering. So if the precondition is violated, there must be a $j<i$ with $(x_1,...,x_{j-1})=(N_1,...,N_{j-1})$ and $x_j>N_j$. This can be detected within the execution of $\Check$ by calling itself recursively. In this manner, we can implement $\Check$ in a way that avoids infinite loops as long as the weaker precondition $(x_1,...,x_{i-1})\ge_{\mathrm{lex}}(N_1,...,N_{i-1})$ holds.

Our actual construction follows the above closely; of course, instead of a single register per level we have four, making the necessary invariants more complicated. Additional issues arise when implementing $\Check$, as registers cannot be detected erroneous while in use. Certain subroutines must hence take care to ensure termination, even when the registers they use are not working properly.

}

\section{A Succinct Population Program}\label{sec:details}
In this section, we construct a population program $\Prog=(Q,\Procs)$ to prove the following:

\begin{restatable}{theorem}{restateExistsProgram}\label{thm:exists-program}
Let $n\in\N$. There exists a population program deciding $\varphi(x)\Leftrightarrow x\ge k$ with size $\O(n)$, for some $k\ge 2^{2^{n-1}}$.
\end{restatable}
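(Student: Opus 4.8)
The plan is to build the population program bottom-up, following the bootstrapping idea from Section~\ref{ssec:de-counting}, but with the error-checking infrastructure needed to survive adversarial initialisation. I would fix constants $N_1,\dots,N_n$ with $N_1$ a small constant and $N_{i+1}=N_i^2$ (or a close variant), so that $N_n\ge 2^{2^{n-1}}$, and use four registers $x_i,\overline x_i,y_i,\overline y_i$ per level $i$ maintaining the target invariants $x_i+\overline x_i=y_i+\overline y_i=N_i$, plus one reserve register $\Reserve$ at the top. The target predicate is $|C|\ge k$ where $k=\sum_i 2N_i$ (or whatever the exact invariant sum turns out to be), and the unique "good" initial configuration $C_m$ for each total mass $m$ is obtained greedily: fill the levels from the bottom to their capacity $2N_i$, then dump the rest into $\Reserve$.

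The core object is a family of procedures $\Check(x_i\ge N_i)$ and $\Check(x_i>N_i)$ (and their $y$/$\overline{\cdot}$ analogues) built by recursion on the level $i$. Following the excerpt, at level $i$ these are implemented using the level-$i$ registers as two $N_{i-1}$-bounded counters with \emph{deterministic} zero-tests, where a zero-test on, say, $x_{i-1}$ is realised by looping the nondeterministic checks $(\Maybe\ x_{i-1}>0)$ against $\Check(\overline x_{i-1}\ge N_{i-1})$ until one certifies. I would first state and prove the correctness lemma for the \emph{idealised} setting: if $(x_1,\dots,x_{i-1})=(N_1,\dots,N_{i-1})$ and likewise for the barred/$y$ registers, then $\Check(x_i\ge N_i)$ returns $\True$ iff $x_i\ge N_i$, terminates, and leaves all registers unchanged. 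This is a straightforward induction: the base case $i=1$ is an explicit constant-size gadget, and the step reuses the level $i-1$ machinery to simulate a single $N_i$-bounded register. Then comes the key strengthening: I would prove that under the \emph{weak} precondition $(x_1,\dots,x_{i-1})\ge_{\mathrm{lex}}(N_1,\dots,N_{i-1})$ the procedure still terminates (possibly with a meaningless boolean), because whenever the strict invariant fails there is a minimal $j<i$ with a prefix equality and $x_j>N_j$, which a recursive self-call inside $\Check$ can catch and convert into a restart. Crucially all these procedures only move agents between registers of the same level, so calls to $\Check$ never perturb register values — this is the property that makes the whole error-analysis local.

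With the $\Check$ family in hand, I would assemble \Call{Main}{} in the shape of the four-step procedure of the simplified model: nondeterministically guess the level $i$; run $\Check(\cdot\ge N_j)$ for all lower levels and for the "currently-filled" level, restarting on any failure; set $\OF$ according to whether the guessed level is the top one and whether the reserve is nonempty (so that $\OF=\True$ exactly matches $|C|\ge k$); then loop forever running the strict checks $\Check(\cdot>N_j)$ and emptiness checks on all strictly-higher registers, issuing a \Restart\ if any of them fires. The correctness argument then splits as in the excerpt: started on the good configuration $C_m$ with the correct guess, step~2 can succeed and step~4 can never restart, so by fairness some run stabilises with the correct output; started on \emph{any} configuration with the same mass, either some register is below its target (handled by the higher-level emptiness/threshold check) or some register exceeds it (handled, after picking the minimal offending level, by the strict check), and in all cases the weak precondition suffices for termination so a restart is always eventually reachable — whence every fair run eventually lands in $C_m$ with the right guess. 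Finally I would bound the size: each level contributes $\O(1)$ registers and $\O(1)$ instructions to the recursion body, the recursion depth is $n$, inlining is avoided so the instruction count stays $\O(n)$, and since agents only ever move within a level the swap-size is $\O(n)$ as well, giving total size $\O(n)$.

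I expect the main obstacle to be exactly the termination-under-weak-precondition part and its interaction with the error-checking loop in step~4. In the idealised analysis everything is clean, but once the invariants $x_i+\overline x_i=N_i$ may be violated — in particular $x_i+\overline x_i<N_i$, the "absence" case that is genuinely undetectable — one has to argue very carefully that the only way to reach such a state inside a $\Check$ call is via a lexicographically-earlier violation that a recursive call has already caught, and that the bookkeeping of \emph{which} registers are trusted at \emph{which} point in the recursion is consistent. The excerpt flags that "certain subroutines must take care to ensure termination even when the registers they use are not working properly"; making the recursion structure of $\Check$ precise enough that this is provable, while keeping each procedure to a constant number of instructions, is where the real work lies. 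A secondary subtlety is choosing $C_m$ and the acceptance condition so that the boundary case (mass exactly $k$, mass $k-1$) is decided correctly and the greedy fill is genuinely the \emph{unique} configuration that survives step~4.
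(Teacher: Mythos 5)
You follow the paper's route in every structural respect — the same four registers $x_i,\overline x_i,y_i,\overline y_i$ per level with invariants $x_i+\overline x_i=y_i+\overline y_i=N_i$, constants growing by squaring, deterministic zero-tests bootstrapped from threshold checks on the complements, a main routine that provisionally accepts and then re-verifies invariants forever, and correctness via fairness-driven restarts — so there is no alternative method to compare. The problem is that you leave open exactly the step that carries the paper's technical weight, and you do not supply the idea needed to close it. Your termination argument ("whenever the strict invariant fails there is a lexicographically earlier violation which a recursive self-call can catch and convert into a restart") only covers corruption strictly below the counters currently in use; in the paper this is the call to \textsc{CheckProper}$(i-2)$ inside the loop of \textsc{Large}, after which fairness forces a restart. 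The genuinely dangerous case is corruption at level $i-1$ itself, i.e.\ the two counters that \textsc{Large} at level $i$ uses to simulate an $N_i$-bounded register satisfy only $x_{i-1}+\overline x_{i-1}\ge N_{i-1}$ and $y_{i-1}+\overline y_{i-1}\ge N_{i-1}$ (the paper's $(i-1)$-high configurations). There the zero-tests may answer \True\ spuriously, no recursive check at any lower level can fire, and for a naive "count up to $N_i$" implementation neither a return nor a \Restart\ need be reachable, so no fairness argument helps: the run can be trapped inside the check, contradicting the "all calls terminate" property your step-4 analysis relies on. The paper resolves this with a specific device: \textsc{Large} is a nondeterministic up/down walk, and \textsc{IncrPair} is shown to be \emph{reversible} under the weak $i$-high hypothesis (Lemma~\ref{lem:incrpair}b), so from any reachable configuration the loop can retrace its steps to its starting configuration and exit, whence fairness gives termination (Lemma~\ref{lem:large}c). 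You correctly identify this as the main obstacle, but flagging it is not resolving it, so the proof as proposed does not go through.

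Two smaller corrections. First, the right weak hypothesis is not the lexicographic condition from the one-register-per-level toy model but the componentwise "$i$-high" notion above, and the accompanying bookkeeping (every subroutine terminates on, and preserves, $j$-high configurations) must be proved for \emph{all} procedures, not just the threshold checks; also note that the checks are not side-effect free off the invariant — the paper deliberately exploits \textsc{Large}'s exchange of $x-N_i$ with $\overline x$ to detect $\overline x>N_i$ inside \textsc{CheckProper}. Second, your acceptance condition involving "whether the reserve is nonempty" is off at the boundary: with $k=2\sum_i N_i$ one must accept exactly the $n$-proper configurations, with any surplus (possibly zero) parked in $\Reserve$; requiring $\Reserve$ to be nonempty would misclassify $m=k$. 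Your size accounting ($\O(1)$ registers and instructions per level, swap-size $\O(n)$ since only $x$ and $\overline x$ are ever swapped) matches the paper and is fine.
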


Full proofs and formal definitions of this section can be found in Appendix~\ref{app:details}.
\smallskip

We use registers $Q:=Q_1\cup...\cup Q_n\cup \{\Reserve\}$, where $Q_i:=\{x_i,y_i,\overline{x}_i,\overline{y}_i\}$ are \emph{level $i$} registers and $\Reserve$ is a \emph{level $n+1$} register. For convenience, we identify $\overline{\overline{x}}$ with $x$ for any register $x$.
\smallskip

\parag{Types of Configurations}
As explained in the previous section, $x$ and $\overline{x}$ are supposed to sum to a constant $N_i$, for a level $i$ register $x\in\{x_i,y_i\}$, which we define via $N_1:=1$ and $N_{i+1}:=(N_i+1)^2$. If this invariant holds, we can use $x,\overline{x}$ to simulate a $N_i$-bounded register, which has value $x$. 

We cannot guarantee that this invariant always holds, so our program must deal with configurations that deviate from this. For this purpose, we classify configurations based on which registers fulfil the invariant, and based on the type of deviation.

A configuration $C\in\N^Q$ is \emph{$i$-proper}, if the invariant holds on levels $1,...,i$, and their simulated registers have value $0$. This is a precondition for most routines. Sometimes we relax the latter requirement on the level $i$ registers; $C$ is \emph{weakly $i$-proper} if it is $(i-1)$-proper and the invariant holds on level $i$.

If $C$ is $(i-1)$-proper and not $i$-proper, then there are essentially two possibilities. Either $C\le C'$ for some $i$-proper $C'$ and we call $C$ \emph{$i$-low}, or $C(x)\ge C'$ for a weakly $i$-proper $C'$ and we call $C$ \emph{$i$-high}. Note that it is possible that $C$ is neither $i$-low nor $i$-high — these configurations are easy to exclude and play only a minor role. We can mostly ensure that $i$-low configurations do not occur, but procedures must provide guarantees when run on $i$-high configurations.

Finally, we say that $C$ is \emph{$i$-empty} if all registers on levels $i,...,n+1$ are empty.

\begin{figure}[H]
\tabcolsep=3.5pt
\newcommand{\B}{\color{nicebluebright}}
\newcommand{\Bo}[1]{\B${#1}$}
\begin{tabular}{lccccccccccccccc}
&
$x_1$&$\overline{x}_1$&
$y_1$&$\overline{y}_1$&
...&
$x_{i-1}$&$\overline{x}_{i-1}$&
$y_{i-1}$&$\overline{y}_{i-1}$&&
$x_{i}$&$\overline{x}_{i}$&
$y_{i}$&$\overline{y}_{i}$&
... \\ \toprule
$i$-proper&
$0$&$N_1$&
$0$&$N_1$&
...&
$0$&$N_{i-1}$&
$0$&$N_{i-1}$&&
$0$&$N_{i}$&
$0$&$N_{i}$&
... \\
weakly $i$-proper&
$0$&$N_1$&
$0$&$N_1$&
...&
$0$&$N_{i-1}$&
$0$&$N_{i-1}$&&
\Bo{3}&\B $N_{i}-3$&
\B $N_{i}-7$&\Bo{7}&
... \\
$i$-low&
$0$&$N_1$&
$0$&$N_1$&
...&
$0$&$N_{i-1}$&
$0$&$N_{i-1}$&&
$0$&\B $N_{i}-3$&
$0$&$N_{i}$&
... \\
$i$-high&
$0$&$N_1$&
$0$&$N_1$&
...&
$0$&$N_{i-1}$&
$0$&$N_{i-1}$&&
\Bo{3}&$N_{i}$&
\B $7$&\B $N_i-5$&
... \\
$i$-empty&
\B$2$&\B$4$&
\B$8$&\B$3$&
...&
\B$5$&\B$3$&
$0$&\B$7$&&
$0$&\B$0$&
$0$&\B$0$&
... \\
\end{tabular}
\caption{Example configurations exhibiting the different types.}
\end{figure}

\parag{Summary} We use the following procedures.
\begin{itemize}
\item \ref{alg:main}. Computation starts by executing this procedure, and \ref{alg:main} ultimately decides the predicate $\varphi(x)\Leftrightarrow x\ge2\sum_{i=1}^nN_i$. %Similar to the simplified model in Section~\ref{ssec:error-detection}, depending on the number of agents $m$ a small number of initial configurations are considered “good” and may stabilise to the correct output (otherwise they restart), while all other configurations always cause a restart. More precisely, a configuration is good if it is $i$-low and $(i+1)$-empty for some $i$, or if it is $n$-proper.
\item \ref{alg:checkempty}. Check whether a configuration is $i$-empty and initiate a restart if not.
\item \ref{alg:checkproper}. Check whether a configuration is $i$-proper or $i$-low, initiate a restart if not.
\item \ref{alg:large}. Nondeterministically check whether a register $x\in Q_i$ is at least $N_i$. %If so, it may return \True\ (but also \False), and otherwise it always returns \False. As a side-effect, if it returns \True\ it exchanges the units of $x-N_i$ and $\overline{x}$. This can be used to test for $x>N_i$ by checking whether $\overline{x}$ is empty afterwards. Of course, if the invariant $x+\overline{x}=N_i$ is met, \ref{alg:large} cannot have an effect.
\item \ref{alg:zero}. Perform a deterministic zero-check on a register $x\in Q_i$.
\item \ref{alg:incrpair}. As described in Section~\ref{ssec:de-counting}, we use two level $i$ registers (which are $N_i$ bounded) to simulate an $N_{i+1}$-bounded register. This procedure implements the increment operation for the simulated register.
\end{itemize}

\begin{figure}[t]
\begin{minipage}[t]{0.51\textwidth}
\begin{namedalgorithm}[H]{AssertEmpty}
\caption{%Verify that registers are empty.
}\label{alg:checkempty}
\begin{algorithmic}[1]
\Input $i\in\{1,...,n+1\}$
\Effect If $i$-empty, do nothing, else it may restart
\Procedure{\Name}{$i$} [$i\le n$]
\State $\Call{AssertEmpty}{i+1}$
\For{$x\in Q_i$}
\If{\Maybe{} $x>0$}
    \State\Restart
\EndIf
\EndFor
\EndProcedure\smallskip
\Procedure{\Name}{$i$} [$i=n{+}1$]
\If{\Maybe{} $\Reserve>0$}
    \State\Restart
\EndIf
\EndProcedure
\end{algorithmic}
\end{namedalgorithm}
\end{minipage}\hfill
\begin{minipage}[t]{0.47\textwidth}
\begin{namedalgorithm}[H]{AssertProper}
\caption{%Verify that registers are proper.
}\label{alg:checkproper}
\begin{algorithmic}[1]
\Input $i\in\{1,...,n\}$
\Effect If $i$-proper or $i$-low, do nothing, else it may restart.
\Procedure{\Name}{$i$}
\State $\Call{AssertProper}{i-1}$
\For{$x\in\{x_i,y_i\}$}
	\If{\Maybe{} $x>0$}
	    \State\Restart
	\EndIf
	\State $\Call{Large}{\overline{x}}$
	\If{\Maybe{} $x>0$}
	    \State\Restart
	\EndIf
\EndFor
\EndProcedure
\end{algorithmic}
\end{namedalgorithm}
\end{minipage}
\end{figure}

\parag{Procedures \ref{alg:checkempty}, \ref{alg:checkproper}}
The procedure \ref{alg:checkempty} is supposed to determine whether a configuration is $i$-empty, which can easily be done by checking whether the relevant registers are nonempty.

Similarly, \ref{alg:checkproper} is used to ensure that the current configuration is not $i$-high. If it is, it may initiate a restart. We remark that calls to $\ref{alg:checkproper}(0)$ have no effect and can simply be omitted. 

\parag{Procedure \ref{alg:zero}}
This procedure implements a deterministic zero-check, as long as the register configuration is weakly $i$-proper. To ensure termination, \ref{alg:checkproper} is called within the loop.

\begin{figure}[t]
\begin{minipage}[t]{0.49\textwidth}
\begin{namedalgorithm}[H]{Zero}
\caption{Check whether a register is equal to $0$.}\label{alg:zero}
\begin{algorithmic}[1]
\Input $x\in\{x_i,\overline{x}_i,y_i,\overline{y}_i\}$
\Output whether $x=0$
\Procedure{\Name}{$x$}
\While{\True}
    \State \Call{AssertProper}{$i-1$}
    \If{\Maybe{} $x>0$}
        \State\Return\False
    \EndIf
    \If{$\Call{Large}{\overline{x}}$}
        \State\Return\True
    \EndIf
\EndWhile
\EndProcedure
\end{algorithmic}
\end{namedalgorithm}
\end{minipage}\hfill
\begin{minipage}[t]{0.49\textwidth}
\begin{namedalgorithm}[H]{IncrPair}
\caption{Decrement a two-digit, base $\beta:=N_i+1$ register}\label{alg:incrpair}%
\begin{algorithmic}[1]
\Input $x\in\{x_i,\overline{x}_i\}$, $y\in\{y_i,\overline{y}_i\}$
\Effect $\beta x+y\pmod{\beta^2}$ decreases by 1
\Procedure{\Name}{$x,y$}
\If{\Call{Zero}{$\overline{y}$}}
    \State \Swap{} $y,\overline{y}$\smallskip
    \If{\Call{Zero}{$\overline{x}$}}
        \State \Swap{} $x,\overline{x}$
    \Else
        {} $\overline{x}\mapsto x$
    \EndIf
\Else
    {} $\overline{y}\mapsto y$
\EndIf
\EndProcedure
\end{algorithmic}
\end{namedalgorithm}
\end{minipage}
\end{figure}

\parag{Procedure \ref{alg:incrpair}}
This is a helper procedure to increment the “virtual”, $N_{i+1}$-bounded counter simulated by $x$ and $y$. It works by first incrementing the second digit, i.e.\ $y$. If an overflow occurs, $x$ is incremented as well. It is also be used to decrement the counter, by running it on $\overline{x}$ and $\overline{y}$.

As we show later, \ref{alg:incrpair} is “reversible” under only the weak assumption that the configuration $C\in\N^Q$ is $i$-high. More precisely, $C,\ref{alg:incrpair}(x,y)\rightarrow C'$ implies $C',\ref{alg:incrpair}(\overline{x},\overline{y})\rightarrow C$. Using this, we can show that \ref{alg:large}, which calls \ref{alg:incrpair} in a loop, terminates.

\begin{namedalgorithm}[t]{Large}
\caption{Nondeterministically check whether a register is maximal.}\label{alg:large}%
\algrenewcommand\algorithmicindent{1.1em}
\begin{minipage}[t]{0.49\textwidth}
\begin{algorithmic}[1]
\Input $x\in\{x_i,\overline{x}_i,y_i,\overline{y}_i\},x\ne y$
\Output if $x\ge N_i$ return \True\ and swap units of $x-N_i$ and $\overline{x}$; or return \False
\smallskip
\Procedure{\Name}{$x$}\quad [for $i=1$]
\If{\Maybe{} $x>0$}
    \State $x\mapsto\overline{x}$
    \State \Swap\ $x,\overline{x}$
    \State\Return\True
\Else
    \State\Return\False
\EndIf
\EndProcedure
\end{algorithmic}
\end{minipage}\hfill
\begin{minipage}[t]{0.49\textwidth}
\begin{algorithmic}[1]
\makeatletter
\setcounter{ALG@line}{7}
\makeatother
\Procedure{\Name}{$x$}\quad [for $i>1$]
\If{$\neg\Call{Zero}{x_{i-1}}\vee\neg\Call{Zero}{y_{i-1}}$}
    \State\Restart
\EndIf
\While{\True}
    \State $\Call{CheckProper}{i-2}$
    \If{\Maybe{} $x>0$}
        \State $x\mapsto\overline{x}$
        \State\Call{IncrPair}{$x_{i-1},y_{i-1}$}
        \If{$\Call{Zero}{x_{i-1}}\wedge\Call{Zero}{y_{i-1}}$}
		    \State \Swap\ $x,\overline{x}$
           	\State\Return\True
        \EndIf
    \Else
        \If{$\Call{Zero}{x_{i-1}}\wedge\Call{Zero}{y_{i-1}}$}
            \State\Return\False
        \EndIf
        \If{\Maybe{} $\overline{x}>0$}
            \State $\overline{x}\mapsto x$
    		\State\Call{IncrPair}{$\overline{x}_{i-1},\overline{y}_{i-1}$}
        \EndIf
    \EndIf
\EndWhile
\EndProcedure
\end{algorithmic}
\end{minipage}
\end{namedalgorithm}

\parag{Procedure \ref{alg:large}}
This is the last of the subroutines, and the most involved one. The goal is to determine whether $x\ge N_i$, by using the registers of level $i-1$ to simulate a “virtual” $N_i$-bounded register. To ensure termination, we use a “random” walk, which nondeterministically moves either up or down. More concretely, at each step either $x$ is found nonempty, one unit is moved to $\overline{x}$ and the virtual register is incremented, or conversely $\overline{x}$ is nonempty, one unit moved to $x$, and the virtual register decremented. If the virtual register reaches $0$ from above, \ref{alg:large} had no effect and returns $\False$. Once the virtual register overflows, a total of $N_i$ units have been moved. These are put back into $x$ by swapping $x$ and $\overline{x}$ and $\True$ is returned.

As mentioned above, \ref{alg:incrpair} is reversible even under weak assumptions. This ensures that the random walk terminates, as it can always retrace its prior steps to go back to its starting point.

\begin{namedalgorithm}[t]{Main}
\caption{Decide whether there are at least $2\sum_iN_i$ agents.}\label{alg:main}
\begin{algorithmic}[1]
\Procedure{\Name}{}
\State $\OF :=\False$
\For{$i=1,...,n$}
    \While{$\neg\Call{Large}{\overline{x}_i}\vee\neg\Call{Large}{\overline{y}_i}$}
        \State\Call{AssertProper}{$i$}
        \State\Call{AssertEmpty}{$i+1$}
    \EndWhile
\EndFor
\State $\OF :=\True$
\While{\True}
    \State\Call{AssertProper}{$n$}
\EndWhile
\EndProcedure
\end{algorithmic}
\end{namedalgorithm}

\parag{Procedure \ref{alg:main}}
Finally, we put things together to arrive at the complete program. The implementation is very close to the steps described in Section~\ref{ssec:error-detection} in the simplified model, but instead of guessing an $i$ we iterate through the possibilities.

As mentioned before, \ref{alg:main} considers a small set of initial configurations “good” and may stabilise. %These configuration are either $j$-low and $(j+1)$-empty (indicating that the threshold is not met and the protocol should reject), or that are $n$-proper (indicating that the protocol should accept — note that there may be arbitrarily many additional agents in the reservoir state $\Reserve$). All other configurations lead to a restart.
The following lemma formalises this.

\begin{restatable}{lemma}{lemmain}\label{lem:main}
\ref{alg:main}, run on register configuration $C\in\N^Q$, can only restart or stabilise, and
\begin{enumerate}[label={(\alph*)}]
\item it may stabilise to $\False$ if $C$ is $j$-low and $(j+1)$-empty, for some $j\in\{1,...,n\}$,
\item it may stabilise to $\True$ if $C$ is $n$-proper, and
\item it always restarts otherwise.
\end{enumerate}
\end{restatable}

\section{Converting Population Programs into Protocols}\label{sec:conversion}
In the previous section we constructed succinct population programs for the threshold predicate. We now justify our model and prove that we can convert population programs into population protocols, keeping the number of states low. We do this in two steps; first we introduce population machines, which are a low-level representation of population programs, then we convert these into population protocols. This results in the following theorem:

\begin{restatable}{theorem}{thmconversion}\label{thm:conversion}
If a population program deciding $\varphi$ with size $n$ exists, then there is a population protocol deciding $\varphi'(x)\Leftrightarrow \varphi(x-i)\wedge x\ge i$ with $\O(n)$ states, for an $i\in\O(n)$.
\end{restatable}

Population machines are introduced in Section~\ref{ssec:formal-model}, they serve to provide a simplified model. Converting population programs into machines is straightforward and uses standard techniques, similar to how one would convert a structured program to use only goto-statements. We will describe this in Section~\ref{ssec:high-to-low}. The conversion to population protocols is finally described in Section~\ref{ssec:low-to-pp}.
Here, we only highlight the key ideas of the conversion. The full details can be found in Appendix~\ref{app:detailedconversion}.

\subsection{Formal Model}\label{ssec:formal-model}

\begin{definition}
A \emph{population machine} is a tuple $\Machine=(Q,\Flags,\Flagdom,\Inst )$, where $Q$ is a finite set of \emph{registers}, $\Flags$ a finite set of \emph{pointers}, $\Flagdom=(\Flagdom_i)_{i\in\Flags}$ a list of \emph{pointer domains}, each of which is a nonempty finite set, and $\Inst=(\Inst_1,...,\Inst_L)$ is a sequence of \emph{instructions}, with $L\in\N$. Additionally, $\OF,\CF,\IP\in\Flags$, $\Flagdom_{\OF}=\Flagdom_{\CF}=\{\False,\True\}$ and $\Flagdom_{\IP}=\{1,..,L\}$. For $x\in Q\cup\{\square\}$ we also require $\Virt{x}\in\Flags$, and $x\in\Flagdom_{\Virt{x}}\subseteq Q$. The \emph{size} of $\Machine$ is $|Q|+|\Flags|+\sum_{X\in\Flags}|\Flagdom_X|+|\Inst|$.

Let $x,y\in Q$, $x\ne y$, $X,Y\in\Flags$, $i\in\{1,...,L\}$ and $f:\Flagdom_Y\rightarrow\Flagdom_X$. There are three types of instructions: $\Inst_i=(x\mapsto y)$, $\Inst_i=(\Maybe\ x>0)$, or $\Inst_i=(X:=f(Y))$.
\end{definition}

A population machine has a number of registers, as usual, and a number of pointers. While each register can take any value in $\N$, a pointer is associated with a finite set of values it may assume. There are three special pointers: the output flag $\OF$, which we have already seen in population programs and is used to indicate the result of the computation, the condition flag $\CF$ used to implement branches, and the instruction pointer $\IP$, storing the index of the next instruction to execute. To implement swap instructions we use a register map; the pointer $\Virt{x}$, for a register $x\in Q$, stores the register $x$ is actually referring to. ($\Virt{\square}$ is a temporary pointer for swapping.) The model allows for arbitrary additional pointers, we will use a one per procedure to store the return address.

There are only three kinds of instructions: $(x\mapsto y)$ and $(\Maybe\ x>0)$ are present in population programs as well and have the same meaning here. (With the slight caveat that $x$ and $y$ are first transformed according to the register map. The instructions do not operate on the actual registers $x,y$, but on the registers pointed to by $V_x$ and $V_y$.) The third, $(X:=f(Y))$ is a general-purpose instruction for pointers. It can change $\IP$ and will be used to implement control flow constructs.

A precise definition of the semantics can be found in Appendix~\ref{app:machinesemantics}.

\subsection{From Population Programs to Machines}\label{ssec:high-to-low}
Population machines do not have high-level constructs such as loops or procedures, but these can be implemented as macros using standard techniques. We show only an example here, a detailed description of the conversion can be found in Appendix~\ref{app:detailedmachines}.

\begin{figure}[H]
\centering
\hspace*{6mm}
\begin{minipage}{0.36\textwidth}
\begin{algorithmic}
\Procedure{Main}{}
\color{nicered}\While{\color{niceblue}$\Maybe\ x>0$\color{nicered}}
    \color{nicepurple}\State $x\mapsto y$
\color{nicered}\EndWhile
\State \textcolor{nicegreen}{\Swap\ $x,y$}
\EndProcedure
\end{algorithmic}
\end{minipage}\scalebox{2}{$\rightsquigarrow$}\hspace{6mm}
\begin{minipage}{0.4\textwidth}
\begin{algorithmic}[1]
\color{niceblue}\State \Maybe\ $x>0$
\color{nicered}\State $\IP:=\Big\{\begin{array}{ll}
5&\text{ if $\CF$}\\
3&\text{ else}
\end{array}$
\color{nicepurple}\State $x\mapsto y$
\color{nicered}\State $\IP:=1$
\color{nicegreen}\State $V_\square:=V_x$
\State $V_x:=V_y$
\State $V_y:=V_\square$
\end{algorithmic}
\end{minipage}
\caption{Conversion to a population machine.}\label{fig:convert-to-machine}%
\end{figure}

Control-flow, i.e.\ \textbf{if}, \textbf{while} and procedure calls are implemented via direct assignment to $\IP$, the instruction pointer, as in lines 2 and 4 above. The statements $(\Maybe\ x>0)$ and $(x\mapsto y)$ are translated one-to-one, but note that in the population machine their operands are first translated via the register map. For example, $(\Maybe\ x>0)$ in line 1 checks whether the register pointed to by $V_x$ is nonzero. Correspondingly, \Swap\ statements result in direct modifications to the register map: lines 5-7 swap the pointers $V_x$ and $V_y$ (and leave the registers they point to unchanged).

\subsection{Conversion to Population Protocols}\label{ssec:low-to-pp}
In this section, we only present a simplified version of our construction. In particular, we make use of multiway transitions to have more than two agents interact at a time. Our actual construction, described in Appendix~\ref{app:detailedprotocols}, avoids them and the associated overhead.

Let $\Machine=(Q,\Flags,\Flagdom,\Inst)$ denote a population machine. To convert this into a population protocol, we use two types of agents: \emph{register agents} to store the values of the registers, and \emph{pointer agents} to store the pointers. For a register we have many identical agents, and the value of the register corresponds to the total number of those agents. They use states $Q$. For each pointer we use a unique agent, storing the value of the pointer in its state; they use states $\{X^v:X\in F,v\in\Flagdom_X\}$.

Let $X_1,...,X_{\Abs{\Flags}}$ denote some enumeration of $\Flags$ with $X_{\Abs{\Flags}}=\IP$, and let $v_i$ denote the initial value of $X_i$. We use $X_1$ as initial state of the protocol. To goal is to have a unique agent for each pointer, so we implement a simple leader election. We use $\ast$ as wildcard.
\[\begin{array}{rlrl}
X_i^*, X_i^*&\mapsto X_i^{v_i}, X_{i+1}^{v_{i+1}}&\qquad
\IP^*, \IP^*&\mapsto X_1^{v_1}, x
\end{array}\]
with $i\in\{1,...,\Abs{\Flags}-1\}$. If two agents store the value of a single pointer, they eventually meet and one of them is moved to another state. When this happens, the computation is restarted ― but note that the values of the registers are not reset. Eventually, the protocol will thus reach a configuration with exactly one agent in $X_i^{v_i}$, for each $i$, and the remaining agents in $Q$.

Starting from this configuration, the instructions can be executed. We illustrate the mapping from instructions to transitions in the following example:

\begin{figure}[H]
\centering
\begin{minipage}{0.24\textwidth}
\begin{algorithmic}[1]
\color{nicepurple}\State $x\mapsto y$
\color{niceblue}\State \Maybe\ $x>0$
\color{nicered}\State $\IP:=\Big\{\begin{array}{ll}
1&\text{ if $\CF$}\\
4&\text{ else}
\end{array}$
\color{nicegreen}\State $\OF:=\neg\CF$
\end{algorithmic}
\end{minipage}\scalebox{2}{$\rightsquigarrow$}\hspace{6mm}
\begin{minipage}{0.6\textwidth}
\[\hspace*{-10mm}\begin{array}{lll}
\color{nicepurple}\IP^1, V_x^v, V_y^w, v &\color{nicepurple}\mapsto \IP^2, V_x^v, V_y^w, w&\quad\text{for }v,w\in Q\\
\color{niceblue}\IP^2, \CF^*, V_x^v, v&\color{niceblue}\mapsto \IP^3, \CF^\TrueNC, V_x^v, v&\quad\text{for }v\in Q\\
\color{niceblue}\IP^2, \CF^*, V_x^v, w&\color{niceblue}\mapsto \IP^3, \CF^\FalseNC, V_x^v, w&\quad\text{for }w\ne v\\
\color{nicered}\IP^3, \CF^\TrueNC&\color{nicered}\mapsto \IP^1, \CF^\TrueNC&\\
\color{nicered}\IP^3, \CF^\FalseNC&\color{nicered}\mapsto \IP^4, \CF^\FalseNC&\\
\color{nicegreen}\IP^4, \OF^*,\CF^\TrueNC&\color{nicegreen}\mapsto \IP^5, \OF^\FalseNC,\CF^\TrueNC&\\
\color{nicegreen}\IP^4, \OF^*,\CF^\FalseNC&\color{nicegreen}\mapsto \IP^5, \OF^\TrueNC,\CF^\FalseNC&
\end{array}\]
\end{minipage}
\caption{Converting instructions into transitions.}\label{fig:convert-to-protocol}%
\end{figure}

For example, in line~1 we want to move one agent from $x$ to $y$ and set the instruction pointer to $2$ (from $1$). Recall that the registers map to states of the population protocol via the register map, stored in pointers $V_x$, where $x\in Q$ is a register. We thus have the following agents initiating the transition:
\begin{itemize}
\item $\IP^1$; the agents storing the instruction pointer currently stores the value $1$,
\item $V_x^v$; the register $x\in Q$ is currently mapped to state $v\in Q$,
\item $v$; an agent in state $v$, i.e.\ representing one unit in register $x$,
\item $V_y^w$; register $y$ is mapped to state $w$.
\end{itemize}
The transition then moves $v$ to state $w$, and increments the instruction pointer.

The above protocol does not come to a consensus. For this to happen, we use a standard output broadcast: we add a single bit to all states. In this bit an agent stores its current opinion. When any agent meets the pointer agent of the output flag $\OF$, the former will assume the opinion of the latter. Eventually, the value of the output flag has stabilised and will propagate throughout the entire population, at which point a consensus has formed.

\section{Robustness of Threshold Protocols}\label{sec:robustness}
A major motivation behind the construction of succinct protocols for threshold predicates is the application to chemical reactions. In this, as in other environments, computations must be able to deal with errors. Prior research has considered \emph{self-stabilising} protocols~\cite{AngluinAFJ05,CaiIW12,BurmanCCDNSX21}. Such a protocol must converge to a desired output regardless of the input configuration. However, it is easy to see that no population protocol for e.g.\ a threshold predicate can be self-stabilising (and prior research has thus focused on investigating extensions of the population protocol model).

In our definition of population programs, the program cannot rely on any guarantees about its input configuration, so they are self-stabilising by definition. However, when we convert to population protocols, we retain only a slightly weaker property, defined as follows:

\begin{definition}
Let $\Prot=(Q,\delta,I,O)$ denote a population protocol deciding $\varphi$ with $\Abs{I}=1$. We say that $\Prot$ is \emph{almost self-stabilising}, if every fair run starting at a configuration $C\in\N^Q$ with $C(I)\ge\Abs{Q}$ stabilises to $\varphi(\Abs{C})$.
\end{definition}

So the initial configuration can be almost arbitrary, but it must contain a small number of agents in the initial state. In many contexts, this is a mild restriction. In a chemical reaction, for example, the number of agents (i.e.\ the number of molecules) is many orders of magnitude larger than the number of states (i.e.\ the number of species of molecules).

In particular, this is also much stronger than any prior construction. All known protocols for threshold predicates are 1-aware~\cite{BlondinEJ18}, and can thus be made to accept by placing a single agent in an accepting state.

\restateMainTwo*
\begin{proof}
The proof is exactly analogous to the proof of Proposition~\ref{pro:low-to-pp}, since Lemma~\ref{lem:electworks} works for any configuration with at least $\Abs{\Flags}$ agents in the initial state, and $\Abs{\Flags}\le\Abs{Q^*}$.
\end{proof}

\section{Conclusions}
We have shown an $\O(\log\log n)$ upper bound on the state complexity of threshold predicates for leaderless population protocols, closing the last remaining gap. Our result is based on a new model, population programs, which enable the specification of leaderless population protocols using structured programs.

As defined, our model of population programs can only decide unary predicates and it seems impossible to decide even quite simple remainder predicates (e.g.\ “is the total number of agents even”). Is this a fundamental limitation, or simply a shortcoming of our specific choices? We tend towards the latter, and hope that other very succinct constructions for leaderless population protocols can make use of a similar approach.

Our construction is almost self-stabilising, which shows that it is possible to construct protocols that are quite robust against \emph{addition} of agents in arbitrary states. A natural next step would be to investigate the \emph{removal} of agents: can a protocol provide guarantees in the case that a small number of agents disappear during the computation?

Threshold predicates can be considered the most important family for the study of space complexity, as they are the simplest way of encoding a number into the protocol. The precise space complexity of other classes of predicates, however, is still mostly open. The existing results generalise somewhat; the construction presented in this paper, for example, can also be used to decide $\varphi(x)\Leftrightarrow x=k$ for $k\ge 2^{2^n}$ with $\O(n)$ states. As mentioned, there also exist succinct constructions for arbitrary predicates, but --- to the extent of our knowledge --- it is still open whether, for example, $\varphi(x)\Leftrightarrow x=0\pmod k$ can be decided for $k\ge2^{2^n}$, both with and without leaders.

\bibliography{references}

\appendix

\appendix
\section{Proofs of Section~\ref{sec:details}}\label{app:details}
In this section, we prove correctness of our construction of the population programs in Section~\ref{sec:details}. First, we introduce the necessary formal definitions to precisely state the guarantees of each procedure.
\medskip

\parag{Definitions}
Let $C\in\N^Q$ and $i\in\{1,...,n\}$. We say that $C$ is
\begin{itemize}
\item \emph{$i$-proper}, if $C(x_j)=C(y_j)=0$ and $C(\overline{x}_j)=C(\overline{y}_j)=N_j$ for $j\in\{1,...,i\}$
\item \emph{weakly $i$-proper}, if $C$ is $(i-1)$-proper and $C(x)+C(\overline{x})=N_i$ for $x\in\{x_i,y_i\}$
\item \emph{$i$-low}, if $C$ is $(i-1)$-proper, not $i$-proper, and $C(x)=0$ and $C(\overline{x})\le N_i$ for all $x\in\{x_i,y_i\}$
\item \emph{$i$-high}, if $C$ is $(i-1)$-proper, not $i$-proper, and $C(x)+C(\overline{x})\ge N_i$ for all $x\in\{x_i,y_i\}$
\item \emph{$i$-empty}, if $C(x)=0$ for all $x\in Q_i\cup...\cup Q_n\cup\{\Reserve\}$
\end{itemize}

A procedure $f$ is \emph{$i$-robust} if for all $i$-high $C$ we have $C,f\not\rightarrow\bot$, and $C,f\rightarrow C',b$ (or $C,f\rightarrow C'$) implies that $C'$ is $i$-high as well. Note that $C,f\rightarrow\Restart$ is allowed. Finally, $f$ is \emph{robust} if it is $i$-robust for all $i\in\{1,...,n\}$.

We set $\Counter_{x,y}(C):=C(x)\cdot(N_i+1)+C(y)$ to the value of the two-digit, base $N_i+1$ counter using $x$ and $y$ as digits, where $C\in\N^Q$, $i\in\{1,...,n\}$ and $x\in\{x_i,\overline{x}_i\},y\in\{y_i,\overline{y}_i\}$.

We sometimes write $\{x:\alpha\}$, where $\alpha$ is independent of $x$. This denotes either $\{x\}$, if $\alpha$, or $\emptyset$ otherwise.

\subsection{\ref{alg:checkempty}}
\begin{restatable}{lemma}{lemcheckempty}\label{lem:checkempty}
Let $C\in\N^Q$, $i\in\{1,...,n{+}1\}$. Then $\Post(C,\ref{alg:checkempty}(i))=\{C\}\cup S$, where $S=\emptyset$ if $C$ is $i$-empty and $S=\{\Restart\}$ otherwise. Moreover, $\ref{alg:checkempty}(i)$ is robust.
\end{restatable}
%\lemcheckempty*
\begin{proof}
Clearly, \ref{alg:checkempty} cannot affect any register, and restarts only if one of the registers $Q_i\cup...\cup Q_n\cup\{\Reserve\}$ is nonzero. Robustness follows immediately.
\end{proof}

\subsection{\ref{alg:checkproper},~\ref{alg:zero},~\ref{alg:incrpair}, and~\ref{alg:large}}
The procedures \ref{alg:checkproper}, \ref{alg:zero}, \ref{alg:incrpair}, and \ref{alg:large} are instantiated for each level and call each other. Population programs allow only acyclic procedure calls, so the correctness proofs can proceed inductively and rely on the correctness of all called procedures. To be formally precise, we must note that the proofs of the following lemmata do not prove the associated lemma independently of the others. They only prove part of the induction step, and only if all proofs work do the statements of the lemmata follow. We therefore label them “proof fragments”.

\begin{restatable}{lemma}{lemcheckproper}\label{lem:checkproper}
Let $C\in\N^Q$, $i\in\{1,...,n\}$. Then 
\begin{enumerate}[label={(\alph*)}]
\item $\Post(C,\ref{alg:checkproper}(i))=\{C\}$ if $C$ is $i$-proper or $i$-low,
\item $C,\ref{alg:checkproper}(i)\rightarrow\Restart$ if $C$ is $j$-high, for some $j\in\{1,...,i\}$,
\item $C,\ref{alg:checkproper}(i)\rightarrow\Restart$ if $C$ is $(i-1)$-proper and $C(x)>0\vee C(\overline{x})>N_i$, for some $x\in\{x_i,y_i\}$, and
\item $\ref{alg:checkproper}(i)$ is robust.
\end{enumerate}
\end{restatable}
%\lemcheckproper*
\begin{proof}[Proof (fragment)]
\textbf{(a)} By induction, the recursive call in line 2 must return with $C$. As $C$ is weakly $i$-proper, line 6 has no effect (Lemma~\ref{lem:large}a) and neither line 5 nor line 8 is executed.

\textbf{(b)} The case $j<i$ is covered inductively, otherwise it follows directly from (c).

\textbf{(c)} If $C(x)>0$, line 5 may execute a restart. If $C(\overline{x})>N_i$, we use Lemma~\ref{lem:large}b to derive that $x$ may be nonzero at line 7. If $x=y_i$, we must also note that Lemma~\ref{lem:large}b ensures that the first iteration of the for-loop either restarts or terminates without affecting $x$ and $\overline{x}$.

\textbf{(d)} Let $C$ be a $j$-high configuration, for some $j$. If $j>i$ then we need only invoke property (a). Otherwise, we use that \ref{alg:checkproper} and \ref{alg:large} are robust (Lemma~\ref{lem:large}c and induction), so their execution terminates and does not affect whether the configuration is $j$-high.
\end{proof}

\begin{restatable}{lemma}{lemzero}\label{lem:zero}
Let $i\in\{1,...,n\}$, $x\in\{x_i,\overline{x}_i,y_i,\overline{y}_i\}$, $C,C'\in\N^Q$. Then
\begin{enumerate}[label={(\alph*)}]
\item $\Post(C,\ref{alg:zero}(x))=\{(C,C(x)=0)\}$ if $C$ is weakly $i$-proper,
\item $\Post(C,\ref{alg:zero}(x))=\{(C,\False):C(x)>0\}\cup\{(C',\True):C(\overline{x})\ge N_i)\}$ if $C$ is $(i-1)$-proper and $C(x)+C(\overline{x})\ge N_i$, where $C'(\overline{x})=C(x)+N_i$, $C'(x)=C(\overline{x})-N_i$ and $C'(z)=C(z)$ for $z\notin\{x,\overline{x}\}$
\item $C,\ref{alg:zero}(x)\rightarrow C',\False$ implies $C'(x)>0$, for all $C'$, and
\item $\ref{alg:zero}(x)$ is robust.
\end{enumerate}
\end{restatable}
%\lemzero*
\begin{proof}[Proof (fragment)]
\textbf{(a)} This follows immediately from (b): if $C(x)+C(\overline{x})=N_i$ then $C(x)=0$ is equivalent to $C(\overline{x})\ge N_i$, and $C'=C$ (assuming $C(\overline{x})\ge N_i$).

\textbf{(b)} As $C$ is $(i-1)$-proper, the call to \ref{alg:checkproper} has no effect (Lemma~\ref{lem:checkproper}a). Further, \ref{alg:large} has no effect as long as it returns \False\ (Lemma~\ref{lem:large}b). Hence, for all iterations of the loop, registers start in $C$. Line 5, therefore, may execute iff $C(x)>0$. Again due to Lemma~\ref{lem:large}b, line 7 can execute iff $C(\overline{x})\ge N_i$, and if so, registers are according to $C'$. Finally, either $C(x)>0$ or $C(\overline{x})\ge N_i$ holds, so eventually line 5 or line 7 will return the correct result due to fairness and the procedure terminates.

\textbf{(c)} This follows from the observation that \False\ can only be returned in line 5.

\textbf{(d)} Let $C$ be $j$-high. If $j>i$ we can invoke property (a). For $j=i$ we use (b), noting that $C'$ is still $i$-high. Otherwise, we use that \ref{alg:checkproper} and \ref{alg:large} are robust and do not affect whether the register configuration is $j$-high. Finally, we know that line~3 is eventually going to restart (Lemma~\ref{lem:checkproper}b and fairness), so the loop cannot repeat infinitely often.
\end{proof}

We want to highlight property (b) of the following lemma; it states that \ref{alg:incrpair} is “reversible” in some sense, under only the weak assumption that the configuration is $i$-high (or $i$\nobreakdash-proper). We need this property later to show that \ref{alg:large} is robust.

Regarding property (c) we remark that, contrary to the other procedures, \ref{alg:incrpair} is not $j$-robust for all $j$, but only $j\le i$. This is simply due to the fact that it is designed to change the value of level $i$ registers; if executed on an $i$-proper configuration it results only in a weakly $i$-proper configuration.

\begin{restatable}{lemma}{lemincrpair}\label{lem:incrpair}
Let $i\in\{1,...,n\}$, $x\in\{x_i,\overline{x}_i\},y\in\{y_i,\overline{y}_i\}$, $C,C'\in\N^Q$. Then
\begin{enumerate}[label={(\alph*)}]
\item $\Post(C,\ref{alg:incrpair}(x,y))=\{C'\}$ if $C$ is weakly $i$-proper, where $C'$ is the unique weakly $i$-proper multiset with $\Counter_{x,y}(C')=\Counter_{x,y}(C)+1\pmod{N_{i+1}}$ and $C'(w)=C(w)$ for $w\notin\{x_i,\overline{x}_i,y_i,\overline{y}_i\}$,
\item $C,\ref{alg:incrpair}(x,y)\rightarrow C'$ implies both $C',\ref{alg:incrpair}(\overline{x},\overline{y})\rightarrow C$ and $C'(z)=C(z)$ for $z\notin Q_i$, if $C$ is $(i-1)$-proper and $C(w)+C(\overline{w})\ge N_i$ for $w\in\{x_i,y_i\}$, and
\item $\ref{alg:incrpair}(x,y)$ is $j$-robust, for $j\le i$.
\end{enumerate}
\end{restatable}
%\lemincrpair*
\begin{proof}[Proof (fragment)]
\textbf{(a)} If $C$ is weakly $i$-proper, the calls to \ref{alg:zero} work deterministically and the registers $x$ and $y$ are adjusted according to the specification: line 2 checks whether $y$ (the least significant digit) is $N_i$. If not, it is incremented. Otherwise, it overflows; $y$ is set to $0$ and $x$ is incremented, checking whether it overflows as well. Finally, note $N_{i+1}=(N_i+1)^2$.

\textbf{(b)} The property $C'(z)=C(z)$ for $z\notin Q_i$ follows immediate from Lemma~\ref{lem:zero}b. In particular, lines 4-6 only affect the values of $x$ and $\overline{x}$, while lines 2,3 and 7 only affect $y$ and $\overline{y}$. We now consider executing \ref{alg:incrpair} twice, first with arguments $x,y$, then with $\overline{x},\overline{y}$. We start with registers $C$, and argue that it is possible for the second execution to take the same branches (in lines 2 and 4) as the first. Afterwards we derive that the registers again have values $C$.

Consider line 2. If the branch is not taken, \ref{alg:zero} had no effect. After $\overline{y}\mapsto y$ in line 7, clearly $C'(y)>0$. In the second execution, line 2 runs $\ref{alg:zero}(y)$ (recall that the second execution has different arguments). This may now return \False\ and the same branch is taken.

If the branch in line 2 is taken, after line 3 registers $y,\overline{y}$ have been changed. More precisely, $N_i$ units have been moved from $y$ to $\overline{y}$. Lines 4-6 do not affect $y,\overline{y}$, so $C'(\overline{y})\ge N_i$. In the second execution, the call $\ref{alg:zero}(y)$ may then return \True.

The argument for the branch in line 4 is analogous. Finally, we argue that, if the same branches are taken, the second execution undoes the changes of the first. Briefly, if the branch in line 2 is not taken, only line 7 changes any registers. Clearly, executing $\overline{y}\mapsto y$ and then $y\mapsto\overline{y}$ has no effect. If it is taken, the combined effect of lines 2 and 3 is moving $N_i$ units from $y$ to $\overline{y}$, which are then moved back in the second execution. Again the situation for lines 4-6 is analogous.

\textbf{(c)} Let $C$ be $j$-high, for $j\le i$. As \ref{alg:zero} is robust, it does not affect whether the register configuration is $j$-high and either terminates or restarts. Lines 3,5,6 and 7, if executed, also do not affect $j$-highness. Finally, there is no loop and Lemma~\ref{lem:zero}c implies that lines 6 and 7 cannot hang, so \ref{alg:incrpair} either terminates or restarts.
\end{proof}

\begin{restatable}{lemma}{lemlarge}\label{lem:large}
Let $i\in\{1,...,n\}$, $x\in\{x_i,\overline{x}_i,y_i,\overline{y}_i\}$, and $C\in\N^Q$. Then
\begin{enumerate}[label={(\alph*)}]
\item $\Post(C,\ref{alg:large}(x))=\{(C,\False),(C,C(x)\ge N_i)\}$ if $C$ is weakly $i$-proper,
\item $\Post(C,\ref{alg:large}(x))=\{(C,\False)\}\cup\{(C',\True):C(x)\ge N_i\}$ if $C$ is $(i-1)$-proper, with $C'(x)=C(\overline{x})+N_i$, $C'(\overline{x})=C(x)-N_i$ and $C'(z)=C(z)$ for $z\notin\{x,\overline{x}\}$, and
\item $\ref{alg:large}(x)$ is robust.
\end{enumerate}
\end{restatable}
%\lemlarge*
\begin{proof}[Proof (fragment)]
\textbf{(a)} Follows directly from (b); if $C(x)\ge N_i$ and $C$ is weakly $i$-proper, then $C(x)=N_i$ and $C(\overline{x})=0$, which implies $C'=C$.

\textbf{(b)} The case $i=1$ is trivial. Assume $i>1$. The registers will remain in a weakly $(i-1)$-proper configuration; lines 14, 17 and 23 do not affect this, and neither do the calls to \ref{alg:incrpair} (Lemma~\ref{lem:incrpair}a), to \ref{alg:checkproper} (Lemma~\ref{lem:checkproper}a), nor to \ref{alg:zero} (Lemma~\ref{lem:zero}a). As the registers are weakly $(i-1)$-proper, the calls to \ref{alg:zero} work as intended and deterministically check whether the register is zero (again, Lemma~\ref{lem:zero}a). In particular, using $C(x_{i-1})=C(y_{i-1})=0$ we find that line 10 cannot execute. Additionally, since the registers remain weakly $(i-1)$-proper and thus $(i-2)$-proper, line 12 has no effect (Lemma~\ref{lem:checkproper}a).

We consider the register simulated by \ref{alg:incrpair}; for convenience we introduce the shorthand $\Counter:=\Counter_{x_{i-1},y_{i-1}}$. As $C$ was $(i-1)$-proper, $\Counter(C)=0$. This counter is only modified by the calls to \ref{alg:incrpair}, as specified by Lemma~\ref{lem:incrpair}a. Line 15 increments the counter, and line 24 decrements it. Line 15 may overflow the counter, but then the branch in line 16 will immediately be taken. Line 24 can only execute of the check in line 20 fails, so it cannot underflow the counter.

As the counter neither over- nor underflows, for any register configuration $C^*$ the procedure reaches at the beginning of the loop in line 12, $\Counter(C^*)$ correspond to units moved from $x$ to $\overline{x}$ via lines 14 and 23.

We now show $C,\ref{alg:large}(x)\rightarrow C,\False$ and, if $C(x)\ge N_i$, $C,\ref{alg:large}(x)\rightarrow C',\True$. For the former, we even show the stronger property that $C,\False$ can be returned from any iteration of the loop. Let $C^*$ denote some configuration reached at line 12. From now on, we never take the branch in line 13. If $\Counter(C^*)=0$, then we claim $C^*(z)=C(z)$ for all $z$. If $z$ has level at most $i-2$ this follows from $C^*$ being weakly $(i-1)$-proper. If $z$ has level $i-1$, we use $\Counter(C^*)=0=\Counter(C)$. For $z$ at level $i$ or above, note that only registers $x$ and $\overline{x}$ can be modified by the procedure, but $\Counter(C^*)=0$ ensures that no units have moved between them. Using $C^*=C$ we now see that the branch in line 20 can be taken and we return \False\ with registers $C$.

If $\Counter(C^*)>0$, then the branch in line 20 cannot be taken. Using $C^*(\overline{x})\ge\Counter(C^*)>0$, we can take the branch in line 22. In the next iteration of the loop we have decreased $\Counter$ by one; the property then follows from induction. We remark that this also shows that the procedure always terminates.

We now prove $C,\ref{alg:large}(x)\rightarrow C',\True$, assuming $C(x)\ge N_i$. Here, it is possible to take the branch in line 13 $N_i$ times and we do. Afterwards, the counter overflows and line 18 returns \True. As before, the only registers that may have changed relative to $C$ are $x$ and $\overline{x}$. We moved $N_i$ units from $x$ to $\overline{x}$, swapping them then results in $C'$.

Finally, we need to show that the above two cases cover all possibilities. We already argued that the procedure always terminates and no restart can occur. If we return in line 18, the counter was overflowed and $N_i$ units have been moved, resulting in $C'$. If we return in line 21, changes to $x$ and $\overline{x}$ have cancelled out, and we are in $C$.

\textbf{(c)} Let $C$ be a $j$-high configuration, for some $j$. If $j\ge i$ we need only refer to (b), noting that $C'$ is still $j$-high. For $j<i$ we can rely on \ref{alg:checkproper}, \ref{alg:zero} and \ref{alg:incrpair} being $j$-robust (lemmata~\ref{lem:checkproper}d,~\ref{lem:zero}d and~\ref{lem:incrpair}c). In particular, they do not affect whether the register configuration is $j$-high. Neither do lines 14, 17 or 23, so the registers stay $j$-high. Additionally, this yields that the calls to these procedures terminate or restart.

It remains to argue that the loop terminates. If $j\le i-2$ this is ensured by \ref{alg:checkproper} (Lemma~\ref{lem:checkproper}b), so we are left with $j=i-1$. In this case the call to \ref{alg:checkproper} in line 12 has no effect and we shall ignore it. Further, note that the calls to \ref{alg:zero} and \ref{alg:incrpair} can only change a register $z$ if $z$ or $\overline{z}$ is one of their arguments (lemmata~\ref{lem:zero}b and~\ref{lem:incrpair}b).

Let $\Confs$ denote the set of $j$-high configurations. For $D,D'\in\Confs$ we write $D\Rel D'$ if one iteration of the loop (i.e.\ executing lines 12-24 in sequence), starting with registers according to $D$, may end with registers in $D'$ (without returning). We now claim that $\Rel$ is symmetric. To see that this claim suffices, let $C^*$ denote the register configuration after line 9. Using Lemma~\ref{lem:zero}b, $C^*(\overline{x}_{i-1}),C^*(\overline{y}_{i-1})\ge N_i$ hold. Our claim then implies that the loop can go back to $C^*$ after any number of iterations. Eventually, it will do so due to fairness. Then, it may take the else branch in line 19. Using Lemma~\ref{lem:zero}b again, line 21 may execute and the procedure returns.

We now show the claim. Fix $D,D'$ with $D\Rel D'$. There are now two cases: either $D'$ results from $D$ by executing lines 14-16, or lines 20-24. We now need to argue that $D$ may result if the loop starts with $D'$. Consider the first case. The else branch in line 19 may always be taken, so it suffices that lines 20-24 may undo the effects of lines 14-16 from earlier. Due to line 14, $D'(\overline{x})>0$, and the branch in line 22 may be taken. Using Lemma~\ref{lem:zero}b, lines 16 and 20 may cancel out, Lemma~\ref{lem:incrpair}b implies that lines 15 and 24 may cancel, and lines 14 and 23 undo each other as well.

The argument for the second case is analogous. There, line 23 ensures that we can subsequently take the branch in line 13, and the lines cancel in the same manner.
\end{proof}

\subsection{\ref{alg:main}}
\lemmain*
\begin{proof}
The output register $\OF $ is only changed by lines 2 and 7. (This can easily be checked syntactically; no called procedure uses $\OF $.) So either the execution restarts; or one of the two loops in lines 4 and 8 does not terminate and the computation stabilises.

Before moving to claims (a-c), we argue that, if $C$ is $i$-proper, the $i$-th iteration of the for-loop in line 3 may terminates without effect, otherwise it restarts. Here, we use Lemma~\ref{lem:large}a to derive that line 4 has no effect and that the loop condition may be $\False$; due to fairness the loop terminates eventually. Line 5 has no effect as well (Lemma~\ref{lem:checkproper}a), and line 6 either restarts or does nothing (Lemma~\ref{lem:checkempty}).

\textbf{(a)} $C$ is $(j-1)$-proper, so, as argued above, iterations $i\in\{1,...,j-1\}$ of the for-loop may terminate without changing a register, and they restart otherwise. In iteration $i=j$ the while-loop in line 4 cannot terminate, and lines 5-6 have no effect and cannot initiate a restart, so the computation stabilises to \False.

\textbf{(b)} Now all $n$ iterations of the for-loop in line 3 may terminate without effect (or restart, otherwise). If they do, we enter the second while-loop, in line 8, and stabilise to \True.

\textbf{(c)} Let $j\in\{1,...,n\}$ be maximal s.t.\ $C$ is $(j-1)$-proper. (Such a $j$ always exists.) As argued before, the first $j-1$ iterations of the for-loop cannot change any registers. In iteration $i=j$, we have that \ref{alg:checkempty} and \ref{alg:large} always terminate (lemmata~\ref{lem:checkempty} and~\ref{lem:large}b).

There are the following cases.

Case 1, $C$ is $j$-low and not $(j+1)$-empty. As we have argued for (a), in iteration $i=j$ the loop in line 4 cannot terminate, so eventually line 6 will initiate a restart (Lemma~\ref{lem:checkempty}).

Case 2, $C(\overline{x})<N_j$ for some $x\in\{x_j,y_j\}$. We may assume that $C$ is not $j$-low, since we have already covered that possibility in (a) and Case 1. Hence, we have $C(y)>0$ or $C(\overline{y})>N_i$ for some $y\in\{x_i,y_i\}$. In iteration $i=j$ thus \ref{alg:checkproper} either terminates or it may initiate a restart (Lemma~\ref{lem:checkproper}c). However, $\ref{alg:large}(\overline{x})$ will always return \False, so the loop in line 4 repeats infinitely often. Due to fairness, a restart must eventually happen.

Case 3, $C$ is $j$-high. As \ref{alg:checkempty}, \ref{alg:large}, and \ref{alg:checkproper} are robust (lemmata~\ref{lem:checkempty},~\ref{lem:checkproper}d and~\ref{lem:large}c), they terminate or restart and the register configuration will remain $j$-high. Assuming that no restart occurs, we know that the subsequent computation would execute $\ref{alg:checkproper}(k)$ infinitely often, for some $k\ge j$. (This occurs either in line 5, or line 9.) However, Lemma~\ref{lem:checkproper}b guarantees that these calls may restart, so a restart will happen eventually due to fairness.

Note that the above case distinction is exhaustive, as $C$ cannot be $j$-proper (either $j$ would not be maximal, or $C$ would be $n$-proper).
\end{proof}

\subsection{Proof of Theorem~\ref{thm:exists-program}}
\restateExistsProgram*
\begin{proof}
We define $k:=2\sum_{i=1}^nN_i$. (Recall that $N_{i+1}=(N_i+1)^2$ and $N_1=1$, implying $k\ge 2^{2^n}$.)

Let $m\in\N$ and let $\Confs:=\{C\in\N^Q:|C|=m\}$ denote the configurations where registers sum to $i$. It suffices to show that $\Confs$ contains a “good” configuration; i.e.\ an $n$-proper configuration iff $m\ge k$, or a $j$-low and $(j+1)$-empty configuration for some $j\in\{1,..,n\}$ iff $m<k$. If these hold, Lemma~\ref{lem:main} guarantees that every run starting with another kind of configuration eventually restarts. By fairness, at some point the computation restarts with a good configuration and stabilises to the correct output.

It remains to argue that the above claim holds. If $m\ge k$, we note that superfluous units can be left in register $\Reserve$, keeping the configuration $n$-proper; conversely, any $n$-proper configuration $C$ clearly has $|C|\ge k$. Otherwise, a good configuration can have at most $k-1$ agents. To construct such a configuration, let $j$ be maximal s.t.\ $2\sum_{i=1}^{j-1}N_i\le m$. (We remark that $j\in\{1,...,n\}$, due to $m<k$.) We now start with a $(j-1)$-proper and $j$-empty configuration $C$, and distribute the remaining $m-|C|\le 2N_j$ units evenly across $\overline{x}_j$ and $\overline{y}_j$. The resulting configuration is $j$-low and $(j+1)$-empty.

Regarding the size bound, note that we have $4n+1$ registers. We also have $\O(n)$ instructions: \ref{alg:main} has $\O(n)$ instructions and exists only once, while every other procedure has constant length and is instantiated $\O(n)$ times. The swap-size is $\O(n)$ as well, as only registers $x$ and $\overline{x}$ are swapped, for $x\in\bigcup_iQ_i$.
\end{proof}

\section{Detailed Conversion of Population Programs}\label{app:detailedconversion}
Our goal is to prove the following theorem:

\thmconversion*
\begin{proof}
This will follow from propositions~\ref{pro:high-to-low} and~\ref{pro:low-to-pp}, which are proved in the following two sections.
\end{proof}

\subsection{Semantics of Population Machines}\label{app:machinesemantics}
We start by giving a precise definition of how population machines operate. (An intuitive description can be found in Section~\ref{ssec:formal-model}.)

\begin{definition}\label{def:pm-semantics}
A \emph{configuration} is a map $C$ with $C(x)\in\N$ for $x\in Q$ and $C(X)\in\Flagdom_X$ for $X\in\Flags$. The \emph{output} of $C$ is $C(\OF)$. A configuration $C$ is \emph{initial} if $C(\IP)=1$ and $C(\Virt{x})=x$ for $x\in Q$. For two configurations $C,C'$ we write $C\rightarrow C'$ if
\begin{itemize}
\item $\Inst_{C(\IP)}=(x\mapsto y)$, $C'(\IP)=C(\IP)+1$, $C'(C(\Virt{x}))=C(C(\Virt{x}))-1$, $C'(C(\Virt{y}))=C(C(\Virt{y}))+1$ and $C'(z)=C(z)$ for $z\notin\{\IP, C(\Virt{x}), C(\Virt{y})\}$,
\item $\Inst_{C(\IP)}=(\Maybe\ x>0)$, $C'(\IP)=C(\IP)+1$, $C'(\CF)\in\{\False,C(C(\Virt{x}))>0\}$ and $C'(z)=C(z)$ for $z\notin\{\IP,\CF\}$,
\item $\Inst_{C(\IP)}=(X:=f(Y))$, $X\ne\IP$, $C'(\IP)=C(\IP)+1$, $C'(X)=f(C(Y))$ and $C'(z)=C(z)$ for $z\notin\{\IP,X\}$, or
\item $\Inst_{C(\IP)}=(\IP:=f(Y))$, $C'(\IP)=f(C(Y))$ and $C'(z)=C(z)$ for $z\notin\{\IP\}$.
\end{itemize}
To make the $\rightarrow$ relation left-total, we also define $C\rightarrow C$ if there is no $C'\ne C$ with $C\rightarrow C'$.
\end{definition}

The above definition allows for the computation to “hang” in certain situations, e.g.\ when executing $x\mapsto y$ while $x$ is $0$. If this happens, the computation enters an infinite loop and makes no progress.

We use the general definitions of stable computation from Section~\ref{sec:preliminaries}. We say that $\Machine$ \emph{decides} a predicate $\varphi(x)$ if every fair run starting at an initial configuration $C$ stabilises to $\varphi(\sum_{q\in Q}C(q))$.

\subsection{From Population Programs to Machines}\label{app:detailedmachines}
Let $\Prog=(Q,\Procs)$ denote a population program, we convert it to a population machine $\Machine=(Q,\Flags,\Flagdom,\Inst)$.

\parag{If and while} Our model allows for direct manipulation of the instruction pointer. We use this to implement both conditional and unconditional jumps. To evaluate branches, we use the $\CF$ pointer to store the intermediate boolean results. An example is given in Figure~\ref{fig:convert-while}. For more complicated boolean formulae one needs multiple jumps.

Recall also that for-loops are only a macro in population programs, so we do not have to implement them here.

\begin{figure}[H]
\begin{center}
\begin{minipage}{0.4\textwidth}
\begin{algorithmic}
\While{$\neg(\Maybe\ x>0)$}
    \State $x\mapsto y$
\EndWhile
\State ...
\end{algorithmic}
\end{minipage}\scalebox{2}{$\rightsquigarrow$}\hspace{6mm}
\begin{minipage}{0.4\textwidth}
\begin{algorithmic}[1]
\State \Maybe\ $x>0$
\State $\IP:=\Big\{\begin{array}{ll}
5&\text{ if $\CF$}\\
3&\text{ else}
\end{array}$
\State $x\mapsto y$
\State $\IP:=1$
\State ...
\end{algorithmic}
\end{minipage}
\end{center}
\caption{Implementation of a while-loop.}\label{fig:convert-while}%
\end{figure}

\parag{Procedure calls} In a population program, procedures cannot be recursive. More precisely, the directed graph of calls is acyclic. Recall also that procedures do not take arguments, instead the parameters specify a family of procedures. To take an example from Section~\ref{sec:details}, \ref{alg:checkproper} is not a procedure, but $\ref{alg:checkproper}(1),...,\ref{alg:checkproper}(n)$ are. Hence our implementation only needs to deal with returning from a procedure, which involves jumping to the correct instruction and propagating the return value.

For the former, we use a pointer $P$ for each procedure $P\in\Procs$. This pointer has domain $\Flagdom_P\subseteq\{1,...,L\}$. Calling a procedure involves setting this pointer to the address the procedure should return to, before jumping to the first instruction of the procedure. To propagate return values, we store them in $\CF$. A simple example is shown in Figure~\ref{fig:convert-proc}. While $\Flagdom_P:=\{1,...,L\}$ would work, we limit $\Flagdom_P$ to contain only the necessary elements (i.e.\ one per call of $P$) to reduce the size of the resulting machine.

The population program is specified to start by executing $\Call{Main}{}$, so we insert a call to it as the first instruction followed by an infinite loop in case $\Call{Main}{}$ returns.

\begin{figure}[H]
\begin{center}
\begin{minipage}{0.35\textwidth}
\begin{algorithmic}
\State\Call{AddTwo}{}
\State ...\smallskip
\Procedure{AddTwo}{}
    \State $x\mapsto y$
    \State $x\mapsto y$
    \State\Return\True
\EndProcedure
\end{algorithmic}
\end{minipage}\scalebox{2}{$\rightsquigarrow$}\hspace{6mm}
\begin{minipage}{0.3\textwidth}
\begin{algorithmic}[1]
\State $\Call{AddTwo}{}:=3$
\State $\IP:=4$
\State ...\smallskip
\State $x\mapsto y$
\State $x\mapsto y$
\State $\CF:=\True$
\State $\IP:=\Call{AddTwo}{}$
\end{algorithmic}
\end{minipage}
\end{center}
\caption{Implementation of a procedure.}\label{fig:convert-proc}%
\end{figure}

\parag{Swaps} Most of the heavy lifting is in the definition of the machine model (and the later conversion to population protocols). To implement $(\Swap\ x,y)$ we replace it by the instructions $(\Virt{\square}:=\Virt{x}; \Virt{x}:=\Virt{y}; \Virt{y}:=\Virt{\square})$, which adjust the register map. Similar to procedure calls, we prune $\Flagdom_{\Virt{x}}$ to contain only necessary elements to reduce size; the sum $\sum_{x\in Q}\Abs{\Flagdom_{\Virt{x}}}$ then matches the swap-size introduced in Section~\ref{sec:population-programs}.

\parag{Restarts} A restart changes registers arbitrarily and then continues execution at the beginning. We first transform the population program so that it does the first part by itself, as sketched in Figure~\ref{fig:convert-restart}. Afterwards, the remaining restart instruction (e.g.\ Line 7 in Figure~\ref{fig:convert-restart}) is converted to $\IP:=1$. (One could reset the register map by executing $\Virt{x}:=x$ for $x\in Q$, but this is not necessary as it is always a permutation.)

\begin{figure}[H]
\begin{center}
\hfill\begin{minipage}{0.2\textwidth}
\begin{algorithmic}
\State\Restart
\State ...
\end{algorithmic}
\end{minipage}\scalebox{2}{$\rightsquigarrow$}\hspace{10mm}
\begin{minipage}{0.5\textwidth}
\begin{algorithmic}[1]
\State\Call{Restart}{}
\State ...\smallskip
\Procedure{Restart}{}
    \For{$(y,z)\in Q\times \{x\}\cup \{x\}\times Q$}
        \While{\Maybe\ $y>0$}
            \State $y\mapsto z$
        \EndWhile
    \EndFor
    \State\Restart
\EndProcedure
\end{algorithmic}
\end{minipage}
\end{center}
\caption{Implementing restarts. As an intermediate step, restarts are replaced by a helper procedure that moves to a new configuration before restarting. Here $x\in Q$ is arbitrary.}\label{fig:convert-restart}%
\end{figure}

To summarise, we end up with the following statement.

\begin{restatable}{proposition}{prohightolow}\label{pro:high-to-low}
Let $k\in\N$. If a population program deciding $\varphi$ with size $\lambda$ exists, then there is a population machine deciding $\varphi$ with size $\O(\lambda)$.
\end{restatable}
\begin{proof}
Recall that the size of a population program is $\lambda=n+L+S$, where $n$ is the number of registers, $L$ the number of instructions, and $S$ the swap-size.

Our conversion has exactly $n$ registers. We create a pointer for each register and each procedure, so the number of pointers is $\O(n+L)$. As the pointer domains of the procedure pointers correspond to the call-sites of the respective procedures, the total size of these domains is $\O(L)$. The total size of the domains of the register pointers corresponds to the swap-size, so it is $\O(S)$. The domains of the three special pointers $\OF,\CF$ and $\IP$ have size $\O(L)$.

To estimate the number of instructions note that all instructions, except for \Restart, expand to a constant number if instructions. (Conditionals of while and if statements might be arbitrarily long, but they evaluate a corresponding number of instructions.) For \Restart\ we need to introduce the helper procedure of length $\Theta(n)$, but this overhead is only incurred once. So in total we end up with $\O(n+L)$ instructions.
\end{proof}

\subsection{Conversion to Population Protocols}\label{app:detailedprotocols}
Let $\Machine=(Q,\Flags,\Flagdom,\Inst)$ denote a population machine. Our goal is to convert $\Machine$ to a population protocol $\Prot=(Q^*,\delta,\PPInput,O)$.

\parag{States}
The register agents use states $Q$, while the pointer agent for pointer $X\in\Flags$ uses states of the form $Q_X:=\{X^v_s:v\in\Flagdom_X,s\in \Fstates_X\}$. Here, $v\in\Flagdom_X$ stores the current value of the pointer, while $s\in\Fstates_X$ indicates intermediate stages during the execution of an instruction. The possible values of $s$ depend on the type of pointer:
\[\begin{array}{rll}
\Fstates_\IP&:=\{\Fstate{none},\Fstate{wait},\Fstate{half}\}&\\
\Fstates_X&:=\{\Fstate{none},\Fstate{done},\Fstate{emit},\Fstate{take},\Fstate{test},\True,\False\}&\qquad\text{if $X=\Virt{x}$}\\
\Fstates_X&:=\{\Fstate{none},\Fstate{done}\}&\qquad\text{if $X\ne\Virt{x}$, $X\ne\IP$}
\end{array}\]

Finally, to perform the mappings necessary for instruction of the form $(X:=f(Y))$, we add states $Q_\mathrm{map}:=\{X^i_{\Fstate{map}}:\Inst_i=(X:=f(Y))\}$.

In total, we have states $Q^*:=Q\cup\bigcup_{X\in\Flags}Q_X\cup Q_\mathrm{map}$.

\parag{Initial states and leader election}
Let $X_1,...,X_{\Abs{\Flags}}$ denote some enumeration of $\Flags$ with $X_{\Abs{\Flags}}=\IP$. We set $\PPInput:=\{X_1\}$, i.e.\ we use $X_1$ as unique initial state.

For each pointer $X_i$, fix an initial value $v_i\in\Flagdom_{X_i}$. These initial values must fulfil the requirements of initial configurations set forth in Definition~\ref{def:pm-semantics}, i.e.\ $v_{\Abs{\Flags}}:=1$ (recall $X_{\Abs{\Flags}}=\IP$), and $v_i:=x$ if $X_i=\Virt{x}$ for $x\in Q$. To define the transitions, we also fix some arbitrary register $x\in Q$. For convenience, we use $*$ as a wildcard.
\[\renewcommand{\arraystretch}{1.1}
\begin{array}{rll}
(X_i)^*_*, (X_i)^*_*&\mapsto (X_i)^{v_i}_{\Fstate{none}}, (X_{i+1})^{v_{i+1}}_{\Fstate{none}}&\qquad\text{for $i=1,...,\Abs{\Flags}-1$}\\
\IP^*_*, \IP^*_*&\mapsto (X_1)^{v_1}_{\Fstate{none}}, x
\end{array}\TraName{elect}\]
Intuitively, whenever two agents in $X_i$ meet, one of them moves to $X_{i+1}$, initialising it in the process. The pointer $\IP$ is handled slightly differently: here one of the agents moves to $x$ and thus becomes a register agent, while the other moves to $X_1$. This will then re-initialise $X_1,...,X_{\Abs{\Flags}}$.

\parag{Instructions}
The transitions for executing an instruction $I_i$, $i\in\{1,...,L\}$, depend on the type of instruction. The first case is $I_i=(x\mapsto y)$. This is somewhat involved as we need to first translate $x$ and $y$ using the register map. First (the agent responsible for) $\IP$ instructs $\Virt{x}$ to move one agent from the register currently assigned to $x$ to some fixed register $z$. (Note that $z$ is independent of the instruction.) After that is completed, $\Virt{y}$ moves the agent from $z$ to its target. Note that $i=L$ means that the machine hangs.
\[\renewcommand{\arraystretch}{1.2}
\begin{array}{llllll}
\IP^{\,i}_{\Fstate{none}},&(\Virt{x})^v_*&\mapsto \IP^{\,i}_{\Fstate{wait}},&(\Virt{x})^v_{\Fstate{emit}}&\qquad\text{for $v\in\Flagdom_{\Virt{x}}$}\\
(\Virt{x})^v_{\Fstate{emit}},&v&\mapsto (\Virt{x})^v_{\Fstate{done}},&z&\qquad\text{for $v\in\Flagdom_{\Virt{x}}$}\\
\IP^{\,i}_{\Fstate{wait}},&(\Virt{x})^v_\Fstate{done}&\mapsto \IP^{\,i}_{\Fstate{half}},&(\Virt{x})^v_{\Fstate{none}}&\qquad\text{for $v\in\Flagdom_{\Virt{x}}$}\\
\IP^{\,i}_{\Fstate{half}},&(\Virt{y})^v_*&\mapsto \IP^{\,i}_{\Fstate{wait}},&(\Virt{y})^v_{\Fstate{take}}&\qquad\text{for $v\in\Flagdom_{\Virt{y}}$}\\
(\Virt{y})^v_{\Fstate{take}},&z&\mapsto (\Virt{y})^v_{\Fstate{done}},&v&\qquad\text{for $v\in\Flagdom_{\Virt{y}}$}\\
\IP^{\,i}_{\Fstate{wait}},&(\Virt{y})^v_\Fstate{done}&\mapsto \IP^{\,i+1}_{\Fstate{none}},&(\Virt{y})^v_{\Fstate{none}}&\qquad\text{if $i<L$, for $v\in\Flagdom_{\Virt{x}}$}
\end{array}\TraName{move}\]

For $I_i=(\Maybe\ x>0)$ the $\IP$ agent again recruits the $\Virt{x}$ agent to do the actual operation. The latter either detects $x$ or it does not, and then stores the result in $\CF$.
\[\renewcommand{\arraystretch}{1.2}
\begin{array}{llllll}
\IP^{\,i}_{\Fstate{none}},&(\Virt{x})^v_*&\mapsto \IP^{\,i}_{\Fstate{wait}},&(\Virt{x})^v_{\Fstate{test}}&\qquad\text{for $v\in\Flagdom_{\Virt{x}}$}\\
(\Virt{x})^v_{\Fstate{test}},&v&\mapsto (\Virt{x})^v_{\True},&v&\qquad\text{for $v\in\Flagdom_{\Virt{x}}$}\\
(\Virt{x})^v_{\Fstate{test}},&q&\mapsto (\Virt{x})^v_{\False},&q&\qquad\text{for $v\in\Flagdom_{\Virt{x}}$, $q\in Q^*\setminus\{v\}$}\\
(\Virt{x})^v_b,&\CF^*_*&\mapsto (\Virt{x})^v_{\Fstate{done}},&\CF^b_\Fstate{none}&\qquad\text{for $v\in\Flagdom_{\Virt{x}}$, $b\in\{\True,\False\}$}\\
\IP^{\,i}_{\Fstate{wait}},&(\Virt{x})^v_\Fstate{done}&\mapsto \IP^{\,i+1}_{\Fstate{none}},&(\Virt{x})^v_{\Fstate{none}}&\qquad\text{if $i<L$, for $v\in\Flagdom_{\Virt{x}}$}
\end{array}\TraName{test}\]
The third type, $I_i=(X:=f(Y))$, has some special cases. We first assume $Y\ne\IP$ wlog, as the value of $\IP$ is simply $i$ and $f(Y)$ could be replaced by a constant expression. Both $X=Y$ and $X=\IP$ have to be handled separately. The general procedure then is that (the agent responsible for) $\IP$ moves $X$ into an intermediate state in $Q_\mathrm{map}$ and waits. Then, $X$ meets $Y$, updates its value, and finally signals $\IP$ to continue to computation.

We start with the ordinary case $X\notin\{Y,\IP\}$.
\[\renewcommand{\arraystretch}{1.2}
\begin{array}{rll}
\IP^{\,i}_{\Fstate{none}},X^*_*&\mapsto\IP^{\,i}_{\Fstate{wait}},X^i_{\Fstate{map}}&\qquad\text{if $i<L$}\\
X^i_{\Fstate{map}},Y^v_*&\mapsto X^{f(v)}_{\Fstate{done}}, Y^v_{\Fstate{none}}&\qquad\text{for $v\in\Flagdom_Y$}\\
\IP^{\,i}_{\Fstate{wait}},X^{v}_{\Fstate{done}}&\mapsto\IP^{\,i+1}_{\Fstate{none}},X^{v}_{\Fstate{none}}&\qquad\text{for $v\in\Flagdom_X$}
\end{array}\TraName{pointer}\]
Now we handle the special cases. These are easier, as only two agents are involved.
{\renewcommand{\TraNs}{temp}
\[\renewcommand{\arraystretch}{1.2}
\begin{array}{rll}
\IP^{\,i}_{\Fstate{none}},Y^v_*&\mapsto\IP^{f(i)}_{\Fstate{none}},Y^v_{\Fstate{none}}&\qquad\text{if $X=\IP$, for $v\in\Flagdom_Y$}\\
\IP^{\,i}_{\Fstate{none}},Y^v_*&\mapsto\IP^{\,i+1}_{\Fstate{none}},Y^{f(v)}_{\Fstate{none}}&\qquad\text{if $X=Y$, $i<L$, for $v\in\Flagdom_Y$}
\end{array}\TraName{pointer}\]}

\parag{Output broadcast} As mentioned above, we need to ensure that the agents come to a consensus. So we convert $\Prot$ again, to the final population protocol $\Prot'=(Q',\delta',\PPInput',O')$. This uses the standard broadcast construction, so $Q':=Q^*\times\{\True,\False\}$, $\PPInput':=\PPInput\times\{\False\}$, $O':=Q'\times\{\True\}$ and for all $q_1,q_2,q_1',q_2'\in Q^*$ with $(q_1,q_2\mapsto q_1',q_2')\in\delta$ or $(q_1,q_2)=(q_1',q_2')$ we have transitions
\[\begin{array}{rll}
(q_1,*),(q_2,*)&\mapsto (q_1',b),(q_2',b)&\qquad\text{if $\OF^b_*\in\{q_1',q_2'\}$, for a $b\in\{\True,\False\}$}\\
(q_1,b_1),(q_2,b_2)&\mapsto (q_1',b_1),(q_2',b_2)&\qquad\text{otherwise}
\end{array}\]

\parag{Correctness}
We now show that the above conversion is correct. We first define a mapping $\pi$ between configurations of the population machine $\Machine$ and the population protocol $\Prot$ resulting from our conversion. A configuration $C$ of $\Machine$ is mapped to a configuration $\pi(C)$ of $\Prot$ as follows.
\[\begin{array}{lll}
\pi(C)(x)&:=C(x)&\qquad\text{for $x\in Q$}\\
\pi(C)(X^v_{\Fstate{none}})&:=1&\qquad\text{if $C(X)=v$, for $X\in\Flags,v\in\Flagdom_X$}\\
\pi(C)(X^*_*)&:=0&\qquad\text{otherwise}\\
\end{array}\]

First, we prove that any configuration with sufficiently many agents in the initial state reaches a configuration $\pi(C)$, for some $C$. \footnote{To show correctness, we need only the case $c\in\N^{\PPInput}$, but we use it also to show almost self-stabilisation.}

\begin{lemma}\label{lem:electworks}
Every configuration $c\in\N^{Q^*}$ with $c(\PPInput)\ge\Abs{F}$ reaches $\pi(C)\in\N^{Q^*}$ for some initial configuration $C$ of $\Machine$ with $|C|=|\pi(C)|-\Abs{\Flags}$.
\end{lemma}
\begin{proof}
Let $X_1,...,X_{\Abs{\Flags}}$ denote the enumeration used for \TraRef{elect}, and let $d\in\N^{Q^*}$ denote a configuration. If we consider the tuple $\big(d(Q), d((X_{\Abs{\Flags}})^*_*),...,d((X_1)^*_*),\big)$, we see that executing \TraRef{elect} increases its value lexicographically. Hence \TraRef{elect} can only be executed finitely often.

Let $c'$ denote any configuration reachable by $c$. If $c'(X^*_*)\ge2$ for some $X\in\Flags$, then \TraRef{elect} can be executed, so eventually we reach a configuration $c'$ with $c'(X^*_*)\le1$ for all $X$.

Now we use $c(\PPInput)\ge\Abs{F}$. By a simple induction we observe that for every $i\le\Abs{\Flags}$ we have $c'((X_1)^*_*)+...+c'((X_i)^*_*)\ge i$ for every configuration $c'$ reachable from $c$. So eventually, there is exactly one agent in $X^*_*$ for all $X\in\Flags$. At the moment this happens, these agents are in $X^v_{\Fstate{none}}$, where $v$ is the initial state of the pointer. Therefore we have reached a configuration $\pi(C)$; moreover, $C$ must be an initial configuration of $\Machine$ with $|C|=|\pi(C)|-\Abs{\Flags}$ agents.
\end{proof}

\begin{restatable}{proposition}{prolowtopp}\label{pro:low-to-pp}
If a population machine deciding $\varphi$ with size $n$ exists, then there is a population protocol deciding $\varphi'(x)\Leftrightarrow \varphi(x-i)\wedge x\ge i$ with $\O(n)$ states, for some $i\le n$.
\end{restatable}
\begin{proof}
If $\Prot$ is run on a configuration with fewer than $\Abs{\Flags}$ agents, no agent can reach a state $\IP^*_*$ via \TraRef{elect}, and no other transition is enabled. In particular, it is not possible for any agent to enter $\OF^{\True}_*$.

If at least $\Abs{\Flags}$ agents are present, then we use Lemma~\ref{lem:electworks} to show that we eventually reach a configuration $\pi(C)$, where $C$ is initial and $|C|=|\pi(C)|-\Abs{\Flags}$.

To see that a run of $\Prot$ corresponds to one of $\Machine$, we need only convince ourselves that \TraRef{move}, \TraRef{test} and \TraRef{pointer} correctly implement the semantics of Definition~\ref{def:pm-semantics} and move to a configuration $\pi(C')$, where $C\rightarrow C'$.

Every fair run of $\Machine$ stabilises a $b\in\{\True,\False\}$, according to $\varphi$. So eventually there will be a unique agent in $\OF^b_*$, and it will remain in one of these states.

It remains to argue that runs of $\Prot'$ correspond to runs of $\Prot$ (and thus to runs of $\Machine$), and that they stabilise to the correct output. The former is easy to see, as the output broadcast construction simply uses the first component to execute $\Prot$ (and this is not affected by the second). Once a unique agent remains in $\OF^b_*$ in $\Prot$, the corresponding run in $\Prot'$ will have an agent in $(\OF^b_*,b)$. Eventually, this agent will convince all other agents that the output is $b$, and the computation stabilises to $b$. 

As $\Prot$ (and $\Prot'$) use $\Abs{\Flags}$ agents to store the value of each pointer, the corresponding configurations of $\Machine$ are smaller, and $\Prot'$ decides $\varphi'(x)\Leftrightarrow x\ge\Abs{\Flags}\wedge\varphi(x-\Abs{\Flags})$.

Finally, we need to count the states of $\Prot'$. We have $\Abs{Q'}=2\cdot\Abs{Q^*}$ and
\[\Abs{Q^*}=\Abs{Q}+\sum_{X\in\Flags}\Abs{Q_X}+\Abs{Q_\mathrm{map}}\le\Abs{Q}+7\sum_{X\in\Flags}\Abs{\Flagdom_X}+L\in\O(n)\]
\end{proof}

\end{document}